\documentclass[a4paper,12pt]{article}
\usepackage{amsfonts}
\usepackage{amsmath}
\usepackage{amsthm}
\usepackage{amssymb}
\usepackage{url}
\usepackage{mathrsfs}
\usepackage{mathtools}
\usepackage{hhline}
\usepackage{ulem}
\usepackage{float}
\usepackage{bm}
\usepackage{textcomp}
\usepackage{enumitem}
\usepackage{physics}
\usepackage[longnamesfirst]{natbib}
\usepackage{xcolor}

\usepackage{fourier}
\usepackage[T1]{fontenc}
\DeclareMathAlphabet{\mathcal}{OMS}{cmsy}{m}{n}

\usepackage[margin=1in]{geometry}

\setlist[itemize]{itemsep=0pt}
\setlist[enumerate]{itemsep=0pt}

\usepackage[mathlines]{lineno}
\newtheorem{lemma}{Lemma}

\newtheorem{theorem}{Theorem}
\newtheorem{proposition}{Proposition}

\mathtoolsset{showonlyrefs=true}

\DeclareMathOperator*{\argmax}{arg\,max}

\title{Condorcet-loser dominance among scoring rules\footnote{We thank Toyotaka Sakai and Koichi Tadenuma for their helpful comments. Doi acknowledges the financial support from JST SPRING Grant Number JPMJSP2123.}}

\author{Ryoga Doi\footnote{Graduate School of Economics, Keio University, Tokyo 108-8345, Japan. Email: doiryouga@keio.jp.}~ and Kensei Nakamura\footnote{Graduate School of Economics, Hitotsubashi University, Tokyo 186-0004, Japan. Email: kensei.nakamura.econ@gmail.com.}}

\begin{document}

\maketitle
\begin{abstract}
  This paper studies a dominance relation among scoring rules with respect to avoiding the selection of the Condorcet loser.
  In a voting model with three or more alternatives, we say that a scoring rule $f$ \textit{Condorcet-loser-dominates (CL-dominates)} another scoring rule $g$ if the set of profiles where $f$ selects a Condorcet loser is a proper subset of the set where $g$ does.
  We show that the Borda rule not only CL-dominates all other scoring rules, but also is the only scoring rule that CL-dominates some scoring rule.
  \vspace{2mm}\\
  \textit{Keywords}: Condorcet loser, CL-dominance, Scoring rules, Borda rule.\\
  \textit{JEL}: D71, D72.
\end{abstract}

\section{Introduction}

The plurality rule is a widely used method of collective decision-making, for example, in national elections and in voting within committees and meetings.
This rule selects the alternative that is ranked first by the largest number of voters.
Despite its popularity, the plurality rule sometimes selects a Condorcet loser---an alternative that loses a pairwise majority vote against every other alternative.
This weakness was famously pointed out by \citet{borda1784} through the following example.%
\footnote{For an English translation, see \citet{mclean1994condorcet}.}
Consider $21$ voters and three alternatives, A, B, and C.
Suppose that eight voters rank A above B above C, seven rank B above C above A, and six rank C above B above A (see Table \ref{tab:bordaex}).
In this case, the plurality rule chooses A, although A is beaten by B and C in pairwise majority comparisons.
Borda criticized this consequence, remarking that ``[w]e might compare them to two athletes who, having exhausted themselves competing against one another, are beaten by a third who is actually weakest of all.''

\begin{table}[h]
  \centering
  \caption{Preference profile with 21 voters and 3 alternatives (A, B, and C)}
  \label{tab:bordaex}
  \begin{tabular}{|c|c|c|} \hline
    8 voters & 7 voters & 6 voters \\ \hline
    A & B & C \\ \hline
    B & C & B \\ \hline
    C & A & A \\ \hline
  \end{tabular}
\end{table}

Borda then proposed a rule that assigns $2$ points to the first-ranked alternative, $1$ point to the second, and $0$ points to the third, selecting the alternative with the highest total score as the winner.
This rule is now known as the \textit{Borda rule}.
In contrast to the plurality rule, the Borda rule never chooses a Condorcet loser as a winner.
Indeed, in the above example, B is chosen since A gets $16$ points, B gets $28$ points, and C gets $19$ points.

Both the plurality rule and the Borda rule belong to the class of \textit{scoring rules}.
These rules assign scores to alternatives according to their positions in the rankings and compare them by their total scores.
As shown by \citet{Fishburn1976Borda}, all scoring rules other than the Borda rule  select a Condorcet loser at some preference profile (see also \citet{okamoto2019borda}).
Thus, the Borda rule holds a special status among scoring rules.

For the case of three alternatives, by normalizing the score assigned to the first rank to $1$ and the score for the third rank to $0$, each scoring rule can be identified with the score assigned to the second rank, denoted by $s_2 \in [0, 1]$. \citet{lepelley2000scoring} calculated the probability that each scoring rule selects a Condorcet loser when it exists. Table \ref{tab:lepelley} presents part of their results, where $s_2=0$ corresponds to the plurality rule and $s_2=0.5$ corresponds to the Borda rule. From this table, it can be observed that the closer a rule is to the Borda rule, the less likely it is to select the Condorcet loser.

\begin{table}[h]
  \centering
  \caption{Probability of selecting a Condorcet loser (101 voters)}
  \label{tab:lepelley}
  \begin{tabular}{|c|c|c|c|c|c|c|c|c|c|c|c|} \hline
    $s_2$ & 0 & 0.1 & 0.2 & 0.3 & 0.4 & 0.5 & 0.6 & 0.7 & 0.8 & 0.9 & 1.0 \\ \hline
    Prob. & 0.047 & 0.024 & 0.013 & 0.005 & 0.001 & 0 & 0.001 & 0.005 & 0.014 & 0.025 & 0.045 \\ \hline
  \end{tabular}
\end{table}

From this observation, one might think that there could be a relationship such that if a scoring rule similar to the Borda rule (e.g., the rule with $s_2 = 0.499$) chooses the Condorcet loser at some preference profile, then another scoring rule far from the Borda rule (e.g., the rule with $s_2 = 0.001$) does so;
or equivalently, whenever the latter does not choose a Condorcet loser, then the former does not. 
If such a relationship holds, we can conclude that a rule closer to the Borda rule is ``better'' than a rule farther from it in terms of avoiding the Condorcet loser.
The objective of this paper is to examine whether we can rank scoring rules under this criterion.

Formally, given a set of at least three alternatives, we say that a scoring rule $f$ \textit{Condorcet-loser-dominates (CL-dominates)} another scoring rule $g$ if the set of profiles where $f$ selects a Condorcet loser is a proper subset of the set where $g$ does.
Since the Borda rule is the unique scoring rule that never selects a Condorcet loser \citep{Fishburn1976Borda,okamoto2019borda}, it CL-dominates all other scoring rules.
Our main theorem shows that no dominance relationship exists between any pair of non-Borda scoring rules.
In particular, any scoring rule arbitrarily close to the Borda rule (e.g., the rule with $s_2 = 0.499$ in the three-alternative case) is not better in terms of CL dominance than one obviously farther from the Borda rule (e.g., the rule with $s_2 = 0.01$ or $s_2 = 0$, namely, the plurality rule).
Our result further highlights the special status of the Borda rule among scoring rules: it is the only scoring rule that CL-dominates at least one scoring rule.

The concept of CL dominance was first introduced by \citet{doi2024condorcet}. While they restricted their attention to the comparison between the plurality rule and other scoring rules within the three-alternative case, we generalize their analysis to arbitrary pairs of scoring rules with any number of alternatives.

We believe that our study also makes a technical contribution. 
We prove the theorem by showing that for any non-Borda scoring rule $f$ and any scoring rule $f'$, there exists a preference profile in which $f$ selects the Condorcet loser and $f'$ does not (Proposition \ref{prop:existence}). 
Our proof of this proposition relies on induction on the number of alternatives. 
Specifically, we first show that, in the three-alternative case, there always exists a desired profile. 
Then we construct desired profiles for the general $m$-alternative case: 
for most pairs of scoring rules, a desired $m$-alternative profile can be obtained by inserting new alternatives into a three-alternative or $(m-1)$-alternative  profile and adding a ``dummy'' population of voters to it.
This method would be useful for extending properties of scoring rules obtained in cases with few alternatives to the general case.

Our study contributes to two strands of literature on scoring rules.
The first strand examines the performance of all scoring rules in terms of the frequency of selecting a Condorcet loser. For example, \citet{lepelley2000scoring} calculated the probability that each scoring rule selects a Condorcet loser in the three-alternative case under various distributional assumptions. Furthermore, \citet{kamwa2017scoring} calculated the probability that each scoring rule selects a Condorcet loser in the three-alternative case under specific preference restrictions. While these studies found that rules closer to the Borda rule perform better \textit{on average}, we show that no scoring rule that is not the Borda rule is better than any other scoring rule for every preference profile. In contrast to their analyses, our result holds regardless of the number of alternatives.

The second strand studies the characterization of the Borda rule.
\citet{smith1973aggregation} stated that the Borda rule is the only scoring rule under which a Condorcet winner never finishes last.
\citet{young1974axiomatization} showed that the Borda rule is the unique voting rule satisfying four natural properties.
\citet{Fishburn1976Borda} stated that the Borda rule is the unique scoring rule that never chooses a Condorcet loser.
\citet{saari1989dictionary,saari1990borda} proved that the Borda rule admits the fewest inconsistent choices across different sets of alternatives among the class of scoring rules.
\citet{maskin2025borda} characterized the Borda rule on the basis of a weaker version of \citeauthor{Arrow1951}'s (\citeyear{Arrow1951}) independence condition.
In contrast, this paper characterizes the Borda rule using a dominance relation that compares scoring rules.

Finally, we remark on studies of dominance relations among rules.
In matching theory, \citet{gale1962college} 
established that the deferred acceptance algorithm is the optimal rule among those that yield stable matchings, with respect to proposer efficiency.
More recently, \citet{abdulkadiroglu2020efficiency} and \citet{dougan2022robust} showed that the top trading cycle mechanism is not dominated by any other efficient and strategy-proof mechanism in terms of stability. 
In summary, several well-known rules emerge as maximal elements with respect to some dominance relation.
Following this approach, we establish the relative advantage of the Borda rule in the context of voting. 
In voting theory, \citet{maskin1995book} compared the performance of voting rules within a class of rules that satisfy reasonable properties. 
Maskin showed that the pairwise majority rule is the unique voting rule that is transitive on the widest class of domains of preference profiles. 
While Maskin highlights the special status of the pairwise majority rule with respect to transitivity, we show that the Borda rule is the unique rule that dominates some scoring rule under another criterion, CL dominance. 

The remainder of this paper is organized as follows.
Section \ref{sec:model} introduces the model and basic definitions.
Section \ref{sec:result} presents our main result.
Section \ref{sec:conc} concludes.
All proofs are provided in the Appendix.

\section{Model}
\label{sec:model}

Let $X=\{x_1,\dots,x_m\}$ be a set of alternatives with $m\ge 3$.
Let $\mathcal{I}$ be a countably infinite set of potential voters,
and $\mathcal{N}$ be the set of finite subsets of $\mathcal{I}$.
For each $i\in \mathcal{I}$, voter $i$'s \textbf{\textit{preference relation}} $\succsim_i$ is a linear order on $X$.%
\footnote{A linear order on $X$ is a complete, transitive, and anti-symmetric binary relation on $X$.}
For convenience, we define the ranking of each alternative under a preference relation $\succsim_i$ by $$r(x,\succsim_i)=|\{y\in X:y\succsim_i x\}|\in\{1,\dots,m\}.$$
Let $\mathscr{P}$ be the set of all linear orders on $X$.
A \textbf{\textit{preference profile}} is a profile $\succsim = (\succsim_i )_{i\in N} \in \mathscr{P}^N$ such that $N \in \mathcal{N}$.
For notational convenience, we denote the sets of voters associated with profiles $\succsim$, $\succsim'$, and $\succsim^*$ by $N$, $N'$, and $N^*$, respectively.
We use similar notation for other preference profiles unless there is a risk of confusion.
The set of preference profiles is denoted by $\mathcal{D}$, that is, $\mathcal{D}=\bigcup_{N\in \mathcal{N}} \mathscr{P}^N$.

A \textbf{\textit{score vector}} is an $m$-dimensional vector $s=(s(1),\dots,s(m))\in [0,1]^m$ satisfying $$1=s(1)\ge\cdots\ge s(m)=0.$$
For example, the \textit{\textbf{Borda score vector}} $s^B$ is such that $s^B(k)-s^B(k+1)=s^B(k+1)-s^B(k+2)$ for all $k\in\{1,\dots,m-2\}$.
A \textbf{\textit{scoring rule}} with a score vector $s$ is the function $f^s:\mathcal{D}\to 2^X\backslash\{\emptyset\}$ such that for each $\succsim\in\mathcal{D}$, $$f^s(\succsim)=\argmax_{x\in X} \sum_{i\in N}s(r(x,\succsim_i)).$$
The \textit{\textbf{Borda rule}} $f^B$ is the scoring rule with $s^B$.
We often identify a scoring rule with its associated score vector.
An alternative $x$ is a \textbf{\textit{Condorcet loser}} at $\succsim$ if for each $y\neq x$, $$|\{i\in N:y\succsim_i x\}|>|\{i\in N:x\succsim_i y\}|.$$
For each scoring rule $f$, let $$L(f)\equiv \{ ~ \succsim\in \mathcal{D}:f(\succsim)\text{ contains a Condorcet loser at }\succsim ~\}.$$
We say that a scoring rule $f$ \textbf{\textit{CL-dominates}} another scoring rule $f'$ if $$L(f)\subsetneq L(f').$$

\section{Main results}
\label{sec:result}

We begin by clarifying the relationship between the Borda rule and other scoring rules in terms of the Condorcet loser criterion.
As established by the following celebrated theorem, the Borda rule is the unique scoring rule that never selects a Condorcet loser.

\begin{theorem}[Fishburn and Gehrlein, 1976; Okamoto and Sakai, 2019]\label{thm:FG}
  For any non-Borda scoring rule $f \neq f^B$, $L(f) \neq \emptyset$. Furthermore, $L (f^B) = \emptyset$.
\end{theorem}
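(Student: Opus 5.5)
The plan is to prove the two assertions separately: the Borda part by a direct counting identity, and the non-Borda part by a linear-algebra construction that separates scores from pairwise majorities.

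\emph{Borda part.} Here $s^B(k)=(m-k)/(m-1)$. For each profile $\succsim$ with $|N|=n$, I will rewrite the Borda score of $x$ as
\[
\sum_{i\in N}s^B(r(x,\succsim_i))=\frac{1}{m-1}\sum_{y\neq x}|\{i\in N:x\succ_i y\}|,
\]
since a voter ranking $x$ in position $k$ places $x$ above exactly $m-k$ alternatives. If $x$ is a Condorcet loser, every term in the sum is $<n/2$, so the score of $x$ is $<n/2$. On the other hand, the total score over all alternatives is $n\sum_k s^B(k)=nm/2$, so the average score is $n/2$. Hence some alternative strictly beats $x$, so $x\notin f^B(\succsim)$ and $L(f^B)=\emptyset$.

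\emph{Non-Borda part.} I will work in the vector space $\mathbb{R}^{\mathscr{P}}$ of ``signed profiles'' $v$, where $v(P)$ is the signed number of voters with order $P$. Two families of linear maps act on this space: the score of $x$, $S_x(v)=\sum_P v(P)s(r(x,P))$, and the pairwise margin $M_{xy}(v)=\sum_P v(P)(\mathbf{1}[x\succ_P y]-\mathbf{1}[y\succ_P x])$.

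\emph{Key step.} For non-Borda $s$, I will find an integer signed profile $w$ with $M_{xy}(w)=0$ for all pairs, $\sum_P w(P)=0$, and $S_{x_1}(w)>S_{z}(w)$ for all $z\neq x_1$. To get it, I would first locate an index $j$ with $s(j)-s(j+1)\neq s(j+1)-s(j+2)$, which exists because the gaps are not all equal. Then I would compare two small orders that differ only in the positions $j,j+1,j+2$, where three alternatives $a,b,c$ are permuted.
\begin{itemize}
\item Take the three cyclic rotations of $(a,b,c)$ in these slots, and subtract the three rotations of $(a,c,b)$, with the rest of the order fixed.
\item This difference has zero pairwise margins for every pair: it is the Condorcet $3$-cycle minus its reverse. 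The cycle and its reverse each give every pair among $a,b,c$ a margin of $\pm1$, and they cancel.
\item Its effect on the scores of $a,b,c$ is also zero, since each appears once in each slot in both halves. This particular difference is therefore useless, and I would need a less symmetric combination.
\item So instead I would mix positions: compare ``$x$ at slots $j$ and $j+2$'' against ``$x$ twice at slot $j+1$''. The pairwise effect is matched by placing the same alternatives above and below $x$, and the resulting score difference is $s(j)+s(j+2)-2s(j+1)\neq 0$, which is nonzero precisely because $s$ is non-Borda.
\end{itemize}
If this sign is negative, I would use the reversed combination. I would then symmetrize over the other alternatives, averaging over all permutations of $X\setminus\{x\}$ in the remaining positions, so that every $z\neq x$ receives the same score change. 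This yields $w$ favoring $x$ in score with all margins zero.

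\emph{Assembly.} Let $u$ be a small actual profile in which $x$ is a Condorcet loser with every margin $M_{yx}(u)\ge 1$; for example, one voter ranking $x$ last suffices. Set $v=u+Tw+K\cdot\mathbf{1}$, where $\mathbf{1}$ is the profile with one voter of every linear order. Adding $\mathbf{1}$ changes no score differences or margins, and taking $K$ large makes all coordinates of $v$ nonnegative, so $v$ is a genuine profile. The margins of $v$ equal those of $u$, so $x$ remains a Condorcet loser. Choosing $T$ large makes $x$'s score strictly exceed every other alternative's score, so $v\in L(f)$.

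\emph{Main obstacle.} I expect the hardest step to be making the construction of $w$ rigorous: checking that the pairwise effects of ``$x$ at slots $j$ and $j+2$'' and ``$x$ twice at slot $j+1$'' cancel exactly once the neighboring alternatives are chosen suitably. The cleanest route may be dimensional. Margins and the constant profile together span a space whose score functionals are generated by Borda plus constants. So for any non-Borda $s$, the functional $S_x-S_z$ is not in the span of the margin maps and the total-count map, and a $w$ in their common kernel with $S_x(w)-S_z(w)\neq 0$ exists. Rational, hence integer, solutions then follow from the integrality of the linear system, and symmetrization handles all $z$ at once.
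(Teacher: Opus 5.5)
Your Borda half is correct and complete, and the assembly $v=u+Tw+K\mathbf{1}$ is sound. The paper itself gives no proof of this theorem: it is cited, and its first half also follows from Proposition~\ref{prop:existence}.

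\textbf{The gap.} The gap is the key step: you never actually exhibit $w$. Your first combination is not even margin-free. The $3$-cycle minus its reverse has margins $\pm2$; it is their \emph{sum} that is margin-free. Your dimensional fallback only asserts that $S_x-S_z$ lies outside the span of the margin and count functionals whenever $s\neq s^B$. That is the non-Borda half of the theorem restated. It needs a witness, and the witness is exactly the $w$ you left unverified.

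\textbf{How to close it.} Your fourth bullet does close it, and no symmetrization is needed. Choose $j\in\{1,\dots,m-2\}$ with $\delta=s(j)-2s(j+1)+s(j+2)\neq 0$; such a $j$ exists because $s\neq s^B$. Choose distinct $a,b\neq x$. Place the remaining $m-3$ alternatives identically in positions $1,\dots,j-1$ and $j+3,\dots,m$ of four orders whose slots $j,j+1,j+2$ read
\[
P_1:\ x\,a\,b,\qquad P_2:\ b\,a\,x,\qquad Q_1:\ a\,x\,b,\qquad Q_2:\ b\,x\,a,
\]
and set $w=P_1+P_2-Q_1-Q_2$.
\begin{itemize}
\item Margins: every pair involving a fixed alternative has the same orientation in all four orders. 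For each of the pairs $\{x,a\}$, $\{x,b\}$, $\{a,b\}$, each orientation occurs in exactly one $P_i$ and one $Q_i$. So all margins of $w$ vanish.
\item Scores: $S_x(w)=\delta$, $S_a(w)=-\delta$, and $S_z(w)=0$ for every other $z$.
\end{itemize}
Replace $w$ by $-w$ if $\delta<0$, and let $u$ be one voter ranking $x$ last. Then $S_x(v)-S_z(v)\ge T|\delta|-1$ for all $z\neq x$. Hence an integer $T>1/|\delta|$ together with $K\ge T$ gives a genuine profile at which $x$ is a Condorcet loser and $f(v)=\{x\}$.

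\textbf{Comparison with the paper.} Once completed, your route is far shorter than the paper's machinery. The paper uses explicit three-alternative profiles by cases on $s(2)$ and induction on $m$ through $s_{-1}$, $s_{-m}$, $s_{\text{ave}}$ (Lemma~\ref{lem:convolution}). That machinery is built for the stronger Proposition~\ref{prop:existence}. You need only one local perturbation at three consecutive positions, plus the uniform population $K\mathbf{1}$, which plays the role of the paper's uniform sub-profiles, and no induction.

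The price is that your method is tailored to this weaker statement. Write $S^t_y$ for the score functional under a vector $t$. For any $w$ with zero margins and zero total weight, $S^{s^B}_y(w)=0$ for every $y$. So if $s'-s^B$ is a positive multiple of $s-s^B$ (for example, any two rules on the same side of Borda when $m=3$), then $S^{s'}_y(w)$ is a positive multiple of $S^{s}_y(w)$. Every such $w$ therefore ranks the alternatives identically under $s$ and $s'$. Separating two rules as in Proposition~\ref{prop:existence} thus forces the non-zero-margin part of the profile to do real work. That is why the paper's constructions choose group sizes between thresholds that depend on both $s$ and $s'$.
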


A direct consequence of Theorem \ref{thm:FG} is that the Borda rule CL-dominates all other scoring rules.
The remaining question concerns the dominance relationship among non-Borda scoring rules.
As mentioned in the Introduction, probabilistic analyses by \citet{lepelley2000scoring} suggest that rules closer to the Borda rule perform better in avoiding the Condorcet loser.
From this observation, one might expect a dominance relationship where rules closer to Borda would dominate those farther away.
However, this is not the case:

\begin{proposition}
  \label{prop:existence}
  For any non-Borda scoring rule $f\neq f^B$ and another scoring rule $f'$, there exists a preference profile $\succsim\in\mathcal{D}$ such that
  \begin{itemize}
    \item[(1)] $x_1$ is a Condorcet loser at $\succsim$,
    \item[(2)] $f(\succsim)=\{x_1\}$, and
    \item[(3)] $x_1\notin f'(\succsim).$
  \end{itemize}
\end{proposition}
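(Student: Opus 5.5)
We argue by induction on $m$, as announced in the Introduction. The mechanism that makes everything work is linearity: scores and pairwise majority margins are additive across voters. So we can add ``dummy'' populations that shift all pairwise margins uniformly while changing the scores of $f$ and $f'$ in a controlled way.

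\textbf{Base case $m=3$.} Write $f=(1,a,0)$ with $a\neq 1/2$ and $f'=(1,b,0)$.

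First we take a profile $P_0$ in which $x_1$ is a Condorcet loser and $f(P_0)=\{x_1\}$. Such a profile exists by Theorem \ref{thm:FG}. Concretely, for $a<1/2$ a Borda-type example works, with $x_1$ first for a plurality and last otherwise; for $a>1/2$ a mirrored example works, with $x_1$ second for many voters and last for the rest.

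If $x_1\notin f'(P_0)$, we are done. Otherwise we seek a correction profile $Q$ with three properties:
\begin{itemize}
\item[(i)] the pairwise margins of $Q$ are all zero, so the Condorcet loser is unchanged;
\item[(ii)] $f$ gives equal scores to all alternatives in $Q$;
\item[(iii)] $f'$ strictly lowers $x_1$ relative to some $y$.
\end{itemize}
Natural candidates for $Q$ are Condorcet cycles, i.e.\ the three cyclic orders of the alternatives, possibly combined with reversed pairs. A reversed pair contributes zero margins and gives each alternative score $s(1)+s(3)$ or $2s(2)$. Combining these, the score differences under a scoring vector $(1,c,0)$ are proportional to $(1-2c)$. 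Hence a suitable positive combination yields zero differences exactly when $c=1/2$.

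This alone does not separate $f$ from $f'$ when $a\neq b$, so the base case must also use a direct case split. If $b$ lies on the ``other side'' of $a$ relative to its effect on $x_1$, that is, $b>a$ when $a<1/2$, then in $P_0$ the advantage of $x_1$ decreases in the second-place score. We then scale a Borda-type profile so that $x_1$ wins narrowly under $a$ and loses under $b$. If instead $b$ is farther from $1/2$ on the same side, or on the opposite side of $1/2$, we construct the profile explicitly from voter types (counts $n_1,\dots,n_6$ over the six orders). We then solve a small linear feasibility system: the Condorcet-loser inequalities, $S_a(x_1)>S_a(y)$ for both $y\neq x_1$, and $S_b(x_1)<S_b(y)$ for some $y$. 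Since all constraints are strict and homogeneous, rational solutions exist and can be scaled to integers. The heart of the base case is showing this system is feasible whenever $a\neq 1/2$ and $b\neq a$. When $b=a$ the statement fails, so presumably $f'\neq f$ is intended. We would exhibit explicit numerical families parametrized by $a,b$.

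\textbf{Inductive step.} From $m-1$ (or $3$) to $m$, we insert a new alternative into a suitable lower-dimensional profile. The scoring vectors restricted to the relevant positions give an $(m-1)$-alternative rule only after renormalization, so we first choose where to insert the new alternative $x_m$: always at the bottom, always at the top, or at a fixed position. The choice is made so that the induced $(m-1)$-vectors of $f$ and $f'$, after affine normalization, are respectively non-Borda and distinct. We then apply the induction hypothesis.

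If the induced vector of $f$ happens to be Borda, that is, $s$ is linear on the relevant block, we instead reduce to three alternatives. We pick three positions $k<l<n$ where $s$ is not affine and pad the other $m-3$ alternatives in fixed positions. Finally we add a dummy population: all $m!$ orders, or cyclic shifts among the padded alternatives. This makes the inserted alternatives Condorcet-irrelevant and lowers their scores below those of the original ones, while leaving the margins among the original alternatives unchanged in sign.

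We expect the main obstacle to be this case analysis in the inductive step. Specifically, it lies in guaranteeing that both the ``non-Borda'' property of $f$ and the distinctness of $f'$ survive restriction to the chosen positions, and in handling the residual cases where $f$ and $f'$ agree on the block where $f$ is non-Borda. We would first try to choose the block adaptively as one on which $f$ is non-affine and $f\neq f'$ after normalization, arguing such a block of three positions always exists unless $f=f'$ or $f$ is Borda.
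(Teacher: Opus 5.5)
Your outline has the right architecture: a base case for $m=3$, embedding of lower-dimensional profiles, and separate treatment of residual pairs. As it stands, though, it has genuine gaps, and one of its key claims is false.

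You assert that, unless $f=f'$ or $f$ is Borda, there is a block of three positions on which $s$ is non-affine and $s\neq s'$ after normalization. Take $m=4$, $s=(1,\tfrac{1}{2},\tfrac{1}{2},0)$ and $s'=(1,\alpha,\alpha,0)$ with $0<\alpha<1$, $\alpha\neq\tfrac{1}{2}$.
\begin{itemize}
\item The blocks $\{1,2,4\}$ and $\{1,3,4\}$ give the Borda vector $(1,\tfrac{1}{2},0)$.
\item The blocks $\{1,2,3\}$ and $\{2,3,4\}$ normalize \emph{both} $s$ and $s'$ to $(1,0,0)$ and $(1,1,0)$, respectively.
\item Removing the top, bottom, or a middle position gives the same four outcomes, and averaging the middle positions gives $(1,\tfrac{1}{2},0)$ again.
\end{itemize}
So at $m=4$ no reduction of any kind you list applies, and the induction cannot get past this family without an explicit profile. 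This is precisely the exception in Lemma~\ref{lem:convolution} ($s(2)=s(m-1)=\tfrac{1}{2}$, $s'(2)=s'(m-1)$), and the paper supplies explicit profiles for it in its last two subsections.

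The base case is likewise only asserted. When $b<a<\tfrac{1}{2}$, no profile in which $x_1$ is never ranked second can work. When $\tfrac{1}{2}<a<b$, no profile in which $x_1$ is never ranked first can work. In both situations, $x_1$ winning under $a$ forces it to win under $b$ as well. You give no replacement. The paper needs profiles in which different groups rank $x_1$ first, second, and last, tuned by two parameters.

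The second gap is the embedding itself.
\begin{itemize}
\item Inserting the new alternative $x_m$ always at the bottom makes $x_1$ beat it unanimously.
\item Inserting it always at the top gives it the maximal possible score.
\item Adding all $m!$ orders changes neither margins nor score differences.
\end{itemize}
Moreover, the inserted alternatives must not be ``Condorcet-irrelevant''; each must \emph{defeat} $x_1$. The missing idea is a dummy population that makes $x_m$ beat $x_1$ while giving it exactly the average score.

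The paper does this as follows. For every other position, it adds $K$ voters with $x_m$ in that position and the remaining alternatives uniformly permuted. The groups with $x_m$ in positions $j$ and $m+1-j$ cancel in the $x_1$ versus $x_m$ comparison; in particular, the embedded profile (with $x_m$ last) pairs with the group where $x_m$ is first. One extra voter $x_m\succ x_{m-1}\succ\cdots\succ x_1$ then breaks the tie, with $K$ large enough that strict score inequalities survive. Now $x_m$ scores exactly the average. Meanwhile $x_1$ outscores $x_2,\dots,x_{m-1}$ in the embedded profile and ties them in the added groups, so it strictly exceeds that average.

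Padding three alternatives into fixed positions fails for a similar reason. Padded alternatives placed above the three outscore them, and those placed below lose to $x_1$ pairwise. The paper instead spreads the middle slot uniformly over positions $2,\dots,m-1$; this is why the relevant three-alternative vector is the averaged $s_{\text{ave}}$ rather than a block. It then adds a balancing profile in which each $w_j$ gets exactly the average score, and uses $n_3>n_1$ to make each $w_j$ beat $x$.
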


By Theorem \ref{thm:FG} and Proposition \ref{prop:existence}, we obtain our main result.

\begin{theorem}
  \label{thm:main}
  Take any non-Borda scoring rule $f'$. A scoring rule $f$ CL-dominates $f'$ if and only if $f$ is the Borda rule $f^B$.
\end{theorem}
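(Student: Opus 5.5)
Theorem \ref{thm:main} follows directly from Theorem \ref{thm:FG} and Proposition \ref{prop:existence}; I would prove the two directions separately. Fix a non-Borda scoring rule $f'$.

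For the ``if'' direction, let $f=f^B$. By Theorem \ref{thm:FG}, $L(f^B)=\emptyset$, so $L(f^B)\subseteq L(f')$ holds trivially. Since $f'\neq f^B$, the same theorem gives $L(f')\neq\emptyset$. Hence the inclusion is proper, and $f^B$ CL-dominates $f'$.

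For the ``only if'' direction, I argue by contraposition. Suppose $f\neq f^B$; I show that $L(f)\subsetneq L(f')$ fails by exhibiting a profile in $L(f)\setminus L(f')$.
\begin{itemize}
  \item[(a)] If $f=f'$, then $L(f)=L(f')$, so the inclusion is not proper and $f$ does not CL-dominate $f'$.
  \item[(b)] Otherwise, apply Proposition \ref{prop:existence} to the non-Borda rule $f$ and the rule $f'$. This yields a profile $\succsim$ at which $x_1$ is a Condorcet loser, $f(\succsim)=\{x_1\}$, and $x_1\notin f'(\succsim)$. Then $\succsim\in L(f)$.
\end{itemize}

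To conclude in case (b) that $\succsim\notin L(f')$, I must check that $f'(\succsim)$ contains no Condorcet loser. This uses uniqueness of the Condorcet loser. If $x\neq y$ were both Condorcet losers at $\succsim$, then a strict majority would rank $y$ above $x$. Since preferences are linear orders, the voters ranking $x$ above $y$ would then be a strict minority. This contradicts $y$ being beaten by $x$. So $x_1$ is the unique Condorcet loser at $\succsim$, and since $x_1\notin f'(\succsim)$, we get $\succsim\notin L(f')$. Thus $L(f)\not\subseteq L(f')$, and $f$ does not CL-dominate $f'$.

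I expect no real obstacle here: all the substance lies in Proposition \ref{prop:existence}, and the only point requiring care is the uniqueness of the Condorcet loser used in case (b).
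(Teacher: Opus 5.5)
Your proposal is correct and follows the paper's route, which obtains the theorem directly from Theorem~\ref{thm:FG} and Proposition~\ref{prop:existence}. You also make explicit two points the paper leaves implicit: the trivial case $f=f'$, and the uniqueness of the Condorcet loser, which is needed to conclude that the constructed profile lies outside $L(f')$.
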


This result further highlights the special status of the Borda rule: it is the unique scoring rule that CL-dominates some scoring rule.
In other words, no hierarchy exists among non-Borda scoring rules in terms of CL dominance; for any pair of distinct non-Borda rules, there is always a situation where one fails to avoid the Condorcet loser while the other succeeds.

In the Appendix, we prove Proposition \ref{prop:existence} by constructing a desired preference profile for every pair $(f,f')$ satisfying the condition. 
Since these profiles heavily rely on the score vectors, it is difficult to offer a simple construction that can be used for every case. 
To solve this problem, we prove the statement
by induction on the number of alternatives.
First, we construct desired preference profiles for the three-alternative case, depending on the score vectors.
Then, to extend the result in the three-alternative case to the $m$-alternative case, we show that for most pairs of score vectors, we can embed $(m-1)$-dimensional or three-dimensional score vectors (Lemma \ref{lem:convolution} in Appendix).
Finally, we offer a method for constructing the desired profile from such a profile with fewer alternatives.
In what follows, we explain the intuition behind our proof of Proposition \ref{prop:existence}, focusing on two cases: the three- and four-alternative cases.

\paragraph{The three-alternative case.}
In this case, we can identify each score vector $s = (s(1), s(2), s(3))$ with $s(2)$ since we assume that $s(1) = 1$ and $s(3) = 0$. The Borda rule corresponds to $s(2) = {1\over 2}$, and the plurality rule corresponds to $s(2) = 0$.
Suppose that $f$ and $f'$ are associated with score vectors $s$ and $s'$, respectively.

\begin{table}[h]
  \centering
  \caption{Preference profile $\succsim$ with $(2L + L')$ voters and 3 alternatives ($x$, $y$ and $z$)}
  \label{tab:threeex}
  \begin{tabular}{|c|c|c|} \hline
    $L$ voters & $L$ voters & $L'$ voters \\ \hline
    $y$ & $z$ & $y$ \\ \hline
    $x$ & $x$ & $z$ \\ \hline
    $z$ & $y$ & $x$ \\ \hline
  \end{tabular}
\end{table}

For instance, consider the case with ${1\over 2} < s'(2) < s(2) \leq 1$ (cf.~Appendix~A.1.1).
Then, both rules assign a higher weight to the second rank than the Borda rule, and $f$ assigns a higher weight to the second rank than $f'$.
We construct a preference profile $\succsim$ in which a Condorcet loser $x$ at $\succsim$ is chosen under $f$ but not chosen under $f'$.
Consider the preference profile in Table \ref{tab:threeex}.
In the left two columns, $x$ and $y$ are tied under the pairwise majority comparison.
Due to the voters in the right column, $x$ is beaten by $y$.
Similarly, we can see that $x$ is also beaten by $z$ under the pairwise majority comparison.
That is, $x$ is a Condorcet loser.
Since $x$ is in the second place for $2L$ voters in the left two columns, we can make $f$ choose $x$ by letting $2L$ be sufficiently larger than $L'$.
However, if we set $2L$ to be too large relative to $L'$, then $f'$ would also choose $x$.
By balancing these two values, we can construct a preference profile in which $x$ is not chosen by $f'$.
It is possible because $s'(2) < s(2)$ implies that the threshold for $f$ is lower than that of $f'$.

In Appendix~A.1, we explicitly construct desired preference profiles for the other cases.

\paragraph{The four-alternative case.}
By utilizing the preference profiles obtained in the three-alternative case, we construct a desired preference profile.
Let $f$ and $f'$ be two scoring rules associated with $s$ and $s'$, respectively.
Since $f$ is non-Borda, at least one of $(s(1), s(2), s(3))$, $(s(1), {s(2)+s(3)\over 2}, s(4))$, and $(s(2), s(3), s(4))$ is also a non-Borda scoring rule in the three-alternative case.
Assume that the following conditions hold:
(i) $(s(1), s(2), s(3))$ is a score vector (i.e., $s(1) > s(3)$);
(ii) $(s'(1), s'(2), s'(3))$ is a score vector (i.e., $s'(1) > s'(3)$);
(iii) $(s(1), s(2), s(3))$ is not the Borda rule;
and (iv) $(s(1), s(2), s(3)) \neq (s'(1), s'(2), s'(3))$.
Under these conditions, we can construct a desired preference profile from the profile in the case where $m=3$ as follows (note that we discuss later the case where the conditions do not hold).

By the result for the three-alternative case, there exists a preference profile $\succsim$ over $\{ x,y,z \}$ such that a Condorcet loser at $\succsim$ is chosen under $(s(1), s(2), s(3))$ but not chosen under $(s'(1), s'(2), s'(3))$.
Let $\succsim^1$ be the preference profile over $\{ x,y,z,w \}$ obtained by adding the new alternative $w$ to the fourth place for all voters.
By construction, $x$ is chosen under $f$ and not chosen under $f'$ in this profile since $w$ is never chosen.
Similarly, it is easy to verify that $x$ is beaten by $y$ and $z$ under the pairwise majority comparison; however, $w$ is beaten by $x$ since $w$ is in the fourth place for all voters.
We adjust the preference profile so that $x$ is beaten by $w$ under the pairwise majority comparison while preserving the other properties.

\begin{table}[h]
  \centering
  \caption{``Uniform'' preference profile}
  \label{tab:fourex}
  \begin{tabular}{|c|c|c|c|c|c|} \hline
    $L$ voters & $L$ voters & $L$ voters & $L$ voters & $L$ voters & $L$ voters \\ \hline
    $x$ & $x$ & $y$ & $y$ & $z$ & $z$  \\ \hline
    $y$ & $z$ & $x$ & $z$ & $x$ & $y$ \\ \hline
    $z$ & $y$ & $z$ & $x$ & $y$ & $x$  \\ \hline
  \end{tabular}
\end{table}

To achieve this, consider the ``uniform'' preference profile over $\{ x,y,z\}$ in Table \ref{tab:fourex}.
Let $\succsim^2$ be the preference profile over $\{ x,y,z,w\}$ constructed from the uniform preference profile by inserting $w$ in the third place.
Similarly, we define $\succsim^3$ and $\succsim^4$ as the preference profiles constructed from the uniform preference profile by inserting $w$ in the second and first places, respectively.
We assume that the numbers of voters at $\succsim^1$, $\succsim^2$, $\succsim^3$, and $\succsim^4$ are equal.

Note that for all voters, $x$ is ranked above $w$ at $\succsim^1$ and below $w$ at $\succsim^4$. Moreover, by the symmetry of $\succsim^2$ and $\succsim^3$, the total number of wins of $x$ against $w$ at $\succsim^2$ and $\succsim^3$ is equal to the number of wins of $w$ against $x$.
Therefore, in the merged preference profile $(\succsim^1,\succsim^2, \succsim^3, \succsim^4)$, $x$ and $w$ are tied under the majority comparison.
Define the preference profile $\succsim^*$ by $(\succsim^1,\succsim^2, \succsim^3, \succsim^4, \succsim^+)$, where $\succsim^+$ consists of one individual who prefers $w$ to $z$ to $y$ to $x$.
By construction, $x$ is beaten by $w$ at $\succsim^*$ under the majority comparison.
Since $x$ is beaten by $y$ and $z$ at $(\succsim^1, \succsim^+)$ and is tied with them under $(\succsim^2, \succsim^3, \succsim^4)$, we can see that $x$ is also beaten by $y$ and $z$ at $\succsim^*$; that is, $x$ is a Condorcet loser.

Finally, we check which alternatives $f$ and $f'$ choose.
By letting the number of voters in $(\succsim^1,\succsim^2, \succsim^3, \succsim^4)$ be sufficiently large, we can ignore the score from the one-voter preference profile $\succsim^+$.
Note that $x$, $y$, and $z$ get the same score at $(\succsim^2, \succsim^3, \succsim^4)$ under each score vector.
Since $(s(1),s(2),s(3))$ chooses $x$ at $\succsim$, $x$ obtains a higher score than $y$ and $z$ at $\succsim^1$.
Hence, at $(\succsim^1, \succsim^2, \succsim^3, \succsim^4)$, $f$ assigns a higher score to $x$ than those of $y$ and $z$.
Furthermore, by construction, $w$ gets the average score at $(\succsim^1, \succsim^2, \succsim^3, \succsim^4)$.
Since the average score is the highest value only when all the alternatives get the same score, $x$ obtains a higher score than $w$ at $(\succsim^1, \succsim^2, \succsim^3, \succsim^4)$.
Therefore, $x$ is chosen by $f$ at $(\succsim^1, \succsim^2, \succsim^3, \succsim^4)$.
By a similar argument, we can see that $x$ is not chosen by $f'$ at $(\succsim^1, \succsim^2, \succsim^3, \succsim^4)$ since $(s'(1),s'(2),s'(3))$ does not choose $x$ at $\succsim$.

Recall that, in the above construction, we assumed the four conditions (i)--(iv) for the vector $(s(1),s(2),s(3))$ (and $ (s'(1),s'(2),s'(3))$).
Lemma \ref{lem:convolution} in Appendix shows that for any pair $(s, s')$ except for the case where $s(2) = s(3)$ and $s'(2) = s'(3)$, if the conditions (i)--(iv) fail to hold for $(s(1),s(2),s(3))$, then the corresponding conditions for $(s(1), {s(2)+s(3)\over 2}, s(4))$ or $(s(2),s(3),s(4))$ hold.
Even in these cases, we can construct a desired profile from some profile in the case of fewer alternatives, as illustrated above.
We deal with the exceptional case where $s(2) = s(3)$ and $s'(2) = s'(3)$ separately by explicitly providing a preference profile.

\section{Conclusion}
\label{sec:conc}

In this paper, we compared scoring rules in terms of CL dominance.
Our main theorem shows that the Borda rule is the unique scoring rule that dominates at least one scoring rule.
This theoretical finding offers a new perspective in understanding Condorcetian properties of scoring rules.
Furthermore, the proof technique developed in this paper is also our contribution. We believe that in future studies, the method of embedding fewer-alternative cases into the general case can also be used when studying the properties of scoring rules, such as the performance with respect to the choice of Condorcet losers and winners and, more generally, the relationship between the pairwise majority rule and scoring rules. 

\section*{Appendix: Proof of Proposition \ref{prop:existence}}
\renewcommand{\thesubsection}{A.\arabic{subsection}}
\setcounter{subsection}{0}

Take any non-Borda scoring rule $f$ and another scoring rule $f'$. Let $s$ and $s'$ be the associated score vectors, respectively. We prove the proposition by mathematical induction.

\subsection{The case with $m=3$}

Let $X=\{x,y,z\}$ for readability. For any profile $\succsim$, let
\begin{align*}
  &n_1(\succsim)=|\{i\in N: x\succsim_i y\succsim_i z\}|, \\
  &n_2(\succsim)=|\{i\in N: x\succsim_i z\succsim_i y\}|,\\
  &n_3(\succsim)=|\{i\in N: y\succsim_i x\succsim_i z\}|,\\
  &n_4(\succsim)=|\{i\in N: y\succsim_i z\succsim_i x\}|,\\
  &n_5(\succsim)=|\{i\in N: z\succsim_i x\succsim_i y\}|,~\text{ and}\\
  &n_6(\succsim)=|\{i\in N: z\succsim_i y\succsim_i x\}|.
\end{align*}
Unless there is a risk of confusion, we just write $n_k(\succsim)$ as $n_k$ ($k=1,\dots,6$).
We consider six cases.

\subsubsection{The case with $\frac{1}{2}\leq s'(2)<s(2)\leq 1$}
The subcase with $s'(2)=\frac{1}{2}$ is obvious. Otherwise, let $b\in\mathbb{Q}_{++}$ be such that
\[
  \frac{1}{2s(2)-1}<b<\frac{1}{2s'(2)-1}.
\]
Take $p,q\in\mathbb{N}$ with $b=\frac{p}{q}$. Consider a profile $\succsim$ such that
\begin{align*}
  (n_1,n_2,n_3,n_4,n_5,n_6)=(0,0,bq,q,bq,0).
\end{align*}
Since $|\{i\in N: x\succsim_i y\}|=bq<bq+q=|\{i\in N: y\succsim_i x\}|$ and $|\{i\in N: x\succsim_i z\}|=bq<bq+q=|\{i\in N: z\succsim_i x\}|$ hold, $x$ is a Condorcet loser at $\succsim$.

By
\begin{align*}
  &\sum_{i\in N}s(r(x,\succsim_i))=1\cdot 0+s(2)\cdot 2bq = 2bqs(2),\\
  &\sum_{i\in N}s(r(y,\succsim_i))=1\cdot (bq+q)+s(2)\cdot 0 = bq + q,~\text{ and}\\
  &\sum_{i\in N}s(r(z,\succsim_i))=1\cdot bq+s(2)\cdot q = bq + qs(2) ,
\end{align*}
we have $ \sum_{i\in N}s(r(x,\succsim_i))-\sum_{i\in N}s(r(y,\succsim_i)) = 2bqs(2)-bq-q = q(b(2s(2)-1)-1) > 0$.
In addition, since $\sum_{i\in N}s(r(y,\succsim_i))\ge\sum_{i\in N}s(r(z,\succsim_i))$, we have $\sum_{i\in N}s(r(x,\succsim_i))>\sum_{i\in N}s(r(z,\succsim_i))$.
Thus, $f(\succsim)=\{x\}$.

By
\begin{align*}
  &\sum_{i\in N}s'(r(x,\succsim_i))=1\cdot 0+s'(2)\cdot 2bq=2bqs'(2) ~\text{ and}\\
  &\sum_{i\in N}s'(r(y,\succsim_i))=1\cdot (bq+q)+s'(2)\cdot 0=bq+q,
\end{align*}
we have $\sum_{i\in N}s'(r(y,\succsim_i))-\sum_{i\in N}s'(r(x,\succsim_i)) =bq+q-2bqs'(2) = q(1-b(2s'(2)-1)) > 0$.
Hence, $x\notin f'(\succsim)$.

\subsubsection{The case with $\frac{1}{2}< s(2)<s'(2)\leq 1$}
If $s'(2)=1$, let $b\in\mathbb{Q}_{++}$ be such that
\[
  \frac{s(2)}{1-s(2)}<b.
\]
Otherwise, let $b\in\mathbb{Q}_{++}$ be such that
\[
  \frac{s(2)}{1-s(2)}<b<\frac{s'(2)}{1-s'(2)}.
\]
Take $p,q\in\mathbb{N}$ with $b=\frac{p}{q}$. Take $c\in\mathbb{N}$ with
\[
  \frac{bs(2)+1}{2s(2)-1}<c.
\]
Consider a profile $\succsim$ such that
\[
  (n_1,n_2,n_3,n_4,n_5,n_6)=(cq,bq,cq,0,cq,bq+cq+q).
\]
Since $|\{i\in N: x\succsim_i y\}|=bq+2cq<bq+2cq+q=|\{i\in N: y\succsim_i x\}|$ and $|\{i\in N: x\succsim_i z\}|=bq+2cq<bq+2cq+q=|\{i\in N: z\succsim_i x\}|$ hold,
$x$ is a Condorcet loser at $\succsim$.

By
\begin{align*}
  &\sum_{i\in N}s(r(x,\succsim_i))=1\cdot (bq+cq)+s(2)\cdot 2cq = bq + cq + 2cq s(2),\\
  &\sum_{i\in N}s(r(y,\succsim_i))=1\cdot cq+s(2)\cdot (bq+2cq+q) = cq + bqs(2) + 2cqs(2)+q s(2),~\text{ and}\\
  &\sum_{i\in N}s(r(z,\succsim_i))=1\cdot (bq+2cq+q)+s(2)\cdot bq = bq+2cq+q + bq s(2),
\end{align*}
we have $\sum_{i\in N}s(r(x,\succsim_i))-\sum_{i\in N}s(r(y,\succsim_i)) =bq-s(2)(bq+q) = q(b(1-s(2))-s(2)) > 0$ and
$\sum_{i\in N}s(r(x,\succsim_i))-\sum_{i\in N}s(r(z,\succsim_i)) = -cq-q+s(2)(2cq-bq) = q(c(2s(2)-1)-(bs(2)+1)) > 0$.
Thus, $f(\succsim)=\{x\}$.

By
\begin{align*}
  &\sum_{i\in N}s'(r(x,\succsim_i))=1\cdot (bq+cq)+s'(2)\cdot 2cq = bq + cq + 2cq s'(2)~\text{ and}\\
  &\sum_{i\in N}s'(r(y,\succsim_i))=1\cdot cq+s'(2)\cdot (bq+2cq+q)  = cq + bqs'(2) + 2cqs'(2)+q s'(2),
\end{align*}
we have $ \sum_{i\in N}s'(r(y,\succsim_i))-\sum_{i\in N}s'(r(x,\succsim_i)) = -bq+s'(2)(bq+q ) = q(-b(1-s'(2))+s'(2)) > 0$.
Hence, $x\notin f'(\succsim)$.

\subsubsection{The case with $ 0\leq s(2) <\frac{1}{2}\leq s'(2)\leq 1$}
Let $b\in\mathbb{N}$ be such that
\[
  \frac{1+s(2)}{1-2s(2)}<b.
\]
Consider a profile $\succsim$ such that
\[
  (n_1,n_2,n_3,n_4,n_5,n_6)=(0,b+1,0,b,0,2).
\]
Since $|\{i\in N: x\succsim_i y\}|=b+1<b+2=|\{i\in N: y\succsim_i x\}|$ and $|\{i\in N: x\succsim_i z\}|=b+1<b+2=|\{i\in N: z\succsim_i x\}|$ hold,
$x$ is a Condorcet loser at $\succsim$.

By
\begin{align*}
  &\sum_{i\in N}s(r(x,\succsim_i))=1\cdot (b+1)+s(2)\cdot 0 = b+1,\\
  &\sum_{i\in N}s(r(y,\succsim_i))=1\cdot b+s(2)\cdot 2 = b + 2s(2),~\text{ and}\\
  &\sum_{i\in N}s(r(z,\succsim_i))=1\cdot 2+s(2)\cdot (2b+1) = 2 + 2bs(2) + s(2),
\end{align*}
we have $\sum_{i\in N}s(r(x,\succsim_i))-\sum_{i\in N}s(r(y,\succsim_i)) =1-2s(2) > 0$
and $\sum_{i\in N}s(r(x,\succsim_i))-\sum_{i\in N}s(r(z,\succsim_i))=b-1- 2bs(2) - s(2)= b(1-2s(2))-(1+s(2)) > 0$. Thus, $f(\succsim)=\{x\}$.

By
\begin{align*}
  &\sum_{i\in N}s'(r(x,\succsim_i))=1\cdot (b+1)+s'(2)\cdot 0 = b+ 1~\text{ and}\\
  &\sum_{i\in N}s'(r(z,\succsim_i))=1\cdot 2+s'(2)\cdot (2b+1)= 2 + 2bs'(2) + s'(2),
\end{align*}
we have $\sum_{i\in N}s'(r(z,\succsim_i))-\sum_{i\in N}s'(r(x,\succsim_i)) =1-b+s'(2)(2b+1) \ge 1-b+\frac{1}{2}\cdot(2b+1) > 0$.
Hence, $x\notin f'(\succsim)$.

\subsubsection{The case with $ 0\leq s'(2) \le\frac{1}{2}< s(2)\leq 1$}
Let $b\in\mathbb{N}$ be such that
\[
  \frac{1}{2s(2)-1}<b.
\]
Consider a profile $\succsim$ such that
\[
  (n_1,n_2,n_3,n_4,n_5,n_6)=(0,0,b,0,b,1).
\]
Since $|\{i\in N: x\succsim_i y\}|=b<b+1=|\{i\in N: y\succsim_i x\}|$ and $|\{i\in N: x\succsim_i z\}|=b<b+1=|\{i\in N: z\succsim_i x\}|$ hold,
$x$ is a Condorcet loser at $\succsim$.

By
\begin{align*}
  &\sum_{i\in N}s(r(x,\succsim_i))=1\cdot 0+s(2)\cdot 2b = 2b s(2) ,\\
  &\sum_{i\in N}s(r(y,\succsim_i))=1\cdot b+s(2)\cdot 1 = b + s(2),~\text{ and}\\
  &\sum_{i\in N}s(r(z,\succsim_i))=1\cdot (b+1)+s(2)\cdot 0 = b+1,
\end{align*}
we have $\sum_{i\in N}s(r(x,\succsim_i))-\sum_{i\in N}s(r(y,\succsim_i)) = 2b s(2) - b -s(2)  =b(2s(2)-1)-s(2)>1-s(2) \geq 0$
and
$\sum_{i\in N}s(r(x,\succsim_i))-\sum_{i\in N}s(r(z,\succsim_i)) = 2bs(2)-b-1=  b(2s(2)-1)-1 > 0$.
Hence, $f(\succsim)=\{x\}$.

By
\begin{align*}
  &\sum_{i\in N}s'(r(x,\succsim_i))=1\cdot 0+s'(2)\cdot 2b =  2b s'(2)~\text{ and}\\
  &\sum_{i\in N}s'(r(z,\succsim_i))=1\cdot (b+1)+s'(2)\cdot 0= b+1,
\end{align*}
we have $\sum_{i\in N}s'(r(z,\succsim_i))-\sum_{i\in N}s'(r(x,\succsim_i))=b+1- 2b s'(2) \ge b+1-\frac{1}{2}\cdot 2b > 0$.
Thus, $x\notin f'(\succsim)$.

\subsubsection{The case with $ 0\leq s(2)<s'(2) \le \frac{1}{2}$}
The subcase with $s'(2)=\frac{1}{2}$ is obvious. Otherwise, let $b\in\mathbb{Q}_{++}$ be such that
\[
  \frac{1}{2-4s(2)}<b<\frac{1}{2-4s'(2)}.
\]
Take $p,q\in\mathbb{N}$ with $b=\frac{p}{q}$. Consider a profile $\succsim$ such that
\[
  (n_1,n_2,n_3,n_4,n_5,n_6)=(bq,2bq,0,2bq,0,bq+q).
\]
Since $|\{i\in N: x\succsim_i y\}|=3bq<3bq+q=|\{i\in N: y\succsim_i x\}|$ and
$|\{i\in N: x\succsim_i z\}|=3bq<3bq+q=|\{i\in N: z\succsim_i x\}|$ hold,
$x$ is a Condorcet loser at $\succsim$.

By
\begin{align*}
  &\sum_{i\in N}s(r(x,\succsim_i))=1\cdot 3bq+s(2)\cdot 0 = 3bq,\\
  &\sum_{i\in N}s(r(y,\succsim_i))=1\cdot 2bq+s(2)\cdot (2bq+q) = 2bq + 2bqs(2) + qs(2),~\text{ and}\\
  &\sum_{i\in N}s(r(z,\succsim_i))=1\cdot (bq+q)+s(2)\cdot 4bq = bq+q+ 4bqs(2),
\end{align*}
we have $ \sum_{i\in N}s(r(x,\succsim_i))-\sum_{i\in N}s(r(y,\succsim_i)) = bq-2bqs(2) - qs(2) = q(b(1-2s(2))-s(2)) > q(\frac{1}{2}-s(2)) > 0$ and
$\sum_{i\in N}s(r(x,\succsim_i))-\sum_{i\in N}s(r(z,\succsim_i)) = 2bq-q- 4bq s(2) = q(b(2-4s(2))-1) > 0$.
Hence, $f(\succsim)=\{x\}$.

By
\begin{align*}
  &\sum_{i\in N}s'(r(x,\succsim_i))=1\cdot 3bq+s'(2)\cdot 0 = 3bq~\text{ and}\\
  &\sum_{i\in N}s'(r(z,\succsim_i))=1\cdot (bq+q)+s'(2)\cdot 4bq =  bq+q+ 4bqs'(2),
\end{align*}
we have $ \sum_{i\in N}s'(r(z,\succsim_i))-\sum_{i\in N}s'(r(x,\succsim_i)) =q-2bq+ 4bq s'(2) = q(1-b(2-4s'(2))) > 0$.
Thus,  $x\notin f'(\succsim)$.

\subsubsection{The case with $ 0\leq s'(2)<s(2) < \frac{1}{2}$}
Let $b\in\mathbb{Q}_{++}$ be such that
\[
  \frac{2-3s'(2)}{1-2s'(2)}<b<\frac{2-3s(2)}{1-2s(2)}.
\]
Take $p,q\in\mathbb{N}$ with $b=\frac{p}{q}$. Take $c\in \mathbb{N}$ with
\[
  b+\frac{1+2s(2)}{1-2s(2)}<c.
\]
Consider a profile $\succsim$ such that
\[
  (n_1,n_2,n_3,n_4,n_5,n_6)=(0,cq+2q,bq,cq,bq,3q).
\]
Since $|\{i\in N: x\succsim_i y\}|=bq+cq+2q<bq+cq+3q=|\{i\in N: y\succsim_i x\}|$ and $|\{i\in N: x\succsim_i z\}|=bq+cq+2q<bq+cq+3q=|\{i\in N: z\succsim_i x\}|$ hold, $x$ is a Condorcet loser at $
\succsim$.

By
\begin{align*}
  &\sum_{i\in N}s(r(x,\succsim_i))=1\cdot (cq+2q)+s(2)\cdot 2bq = cq+2q+ 2bq s(2),\\
  &\sum_{i\in N}s(r(y,\succsim_i))=1\cdot (bq+cq)+s(2)\cdot 3q = bq + cq + 3q s(2),~\text{ and}\\
  &\sum_{i\in N}s(r(z,\succsim_i))=1\cdot (bq+3q)+s(2)\cdot (2cq+2q) = bq+3q+ 2cqs(2)+2q s(2),
\end{align*}
we have $\sum_{i\in N}s(r(x,\succsim_i))-\sum_{i\in N}s(r(y,\succsim_i)) = 2q+ 2bq s(2) - bq - 3q s(2) = q(2-3s(2)-b(1-2s(2))) > 0$ and
$\sum_{i\in N}s(r(x,\succsim_i))-\sum_{i\in N}s(r(z,\succsim_i))= cq-q+ 2bq s(2) -  bq- 2cqs(2)-2q s(2)= q((c-b)(1-2s(2))-(1+2s(2))) > 0$.
Hence, $f(\succsim)=\{x\}$.

By
\begin{align*}
  &\sum_{i\in N}s'(r(x,\succsim_i))=1\cdot (cq+2q)+s'(2)\cdot 2bq = cq+2q+ 2bq s'(2)~\text{ and}\\
  &\sum_{i\in N}s'(r(y,\succsim_i))=1\cdot (bq+cq)+s'(2)\cdot 3q = bq + cq + 3q s'(2),
\end{align*}
we have $\sum_{i\in N}s'(r(y,\succsim_i))-\sum_{i\in N}s'(r(x,\succsim_i))=bq + 3q s'(2) -2q - 2bq s'(2)= q(b(1-2s'(2))-(2-3s'(2))) > 0$.
Thus, $x\notin f'(\succsim)$.

\subsection{The general case}

Suppose $m>3$ and the statement holds in the case where the number of alternatives is less than $m$.
For each $t\in\{1,m\}$, let $s_{-t}$ (resp. $s'_{-t}$) be the $(m-1)$-dimensional vector obtained by removing the $t$-th element from $s$ (resp. $s'$), and normalizing them if we can. Formally,
\begin{align*}
  &s_{-1}=
  \begin{cases}
    (0,\dots,0) &\text{ if }  s(2)= 0,\\
    (\frac{s(2)}{s(2)},\frac{s(3)}{s(2)},\dots,\frac{s(m)}{s(2)}) &\text{ if } s(2)\neq 0,
  \end{cases} ~~~ \text{and}\\
  &s_{-m}=
  \begin{cases}
    (1,\dots,1) &\text{ if }  s(m-1)= 1,\\
    (\frac{s(1)-s(m-1)}{s(1)-s(m-1)},\frac{s(2)-s(m-1)}{s(1)-s(m-1)},\dots,\frac{s(m-1)-s(m-1)}{s(1)-s(m-1)}) &\text{ if } s(m-1)\neq 1.
  \end{cases}
\end{align*}
In addition, let $s_{\text{ave}}$ be the three-dimensional vector such that $s_{\text{ave}}(1)=s(1)$, $s_{\text{ave}}(2)=\frac{1}{m-2}\sum_{k=2}^{m-1} s(k)$, and $s_{\text{ave}}(3)=s(m)$. Similarly, we also define $s'_{\text{ave}}$.

\begin{lemma}
  \label{lem:convolution}
  Suppose $(s,s')$ does not satisfy the following condition: $s(2)=s(m-1)=\frac{1}{2}$ and $s'(2)=s'(m-1)$. Then one of the following (i), (ii), or (iii) holds:
  \begin{itemize}
    \item[(i)] All of the following conditions 1(a)-1(d) hold:
      \begin{itemize}
        \item[1(a):] $s_{-1}$ is a score vector for $m-1$ alternatives.
        \item[1(b):] $s'_{-1}$ is a score vector for $m-1$ alternatives.
        \item[1(c):] $s_{-1}$ is not the Borda score vector for $m-1$ alternatives.
        \item[1(d):] $s_{-1}\neq s'_{-1}$.
      \end{itemize}
    \item[(ii)] All of the following conditions m(a)-m(d) hold:
      \begin{itemize}
        \item[m(a):] $s_{-m}$ is a score vector for $m-1$ alternatives.
        \item[m(b):] $s'_{-m}$ is a score vector for $m-1$ alternatives.
        \item[m(c):] $s_{-m}$ is not the Borda score vector for $m-1$ alternatives.
        \item[m(d):] $s_{-m}\neq s'_{-m}$.
      \end{itemize}
    \item[(iii)] All of the following conditions ave(a)-ave(d) hold:
      \begin{itemize}
        \item[ave(a):] $s_{\text{ave}}$ is a score vector for $3$ alternatives.
        \item[ave(b):] $s'_{\text{ave}}$ is a score vector for $3$ alternatives.
        \item[ave(c):] $s_{\text{ave}}$ is not the Borda score vector for $3$ alternatives.
        \item[ave(d):] $s_{\text{ave}}\neq s'_{\text{ave}}$.
      \end{itemize}
  \end{itemize}
\end{lemma}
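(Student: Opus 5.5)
We argue by contradiction. Assume $s'\neq s$ (as "another scoring rule" intends), that $s$ is non-Borda, that the excluded configuration does not occur, and that none of (i)--(iii) holds. We then derive that $s$ is Borda or $s=s'$.

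First, we record when each object is a score vector. $s_{\text{ave}}$ always has first entry $1$, last entry $0$ and middle entry in $[0,1]$. Hence ave(a) and ave(b) always hold. $s_{-1}$ is a score vector iff $s(2)>0$, i.e.\ iff $s$ is not plurality. $s_{-m}$ is a score vector iff $s(m-1)<1$, i.e.\ iff $s$ is not antiplurality. The same holds for $s'$.

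Next come three "rigidity" facts, each a short computation.

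(R1) If $s_{-1}$ and $s_{-m}$ are both Borda, then $s$ is Borda. The differences of $s$ are constant on positions $2,\dots,m$ and on positions $1,\dots,m-1$. Since $m\ge4$, these two ranges share the difference $s(2)-s(3)$, so all differences coincide.

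(R2) If $s_{-m}$ is Borda and $s_{\text{ave}}$ is Borda, then $s$ is Borda. Write $s(k)=1-(k-1)(1-a)/(m-2)$ for $k\le m-1$, where $a=s(m-1)$. The middle average is $(s(2)+s(m-1))/2$, and setting it equal to $\tfrac12$ forces $a=1/(m-1)$. Symmetrically, if $s_{-1}$ is Borda with $s(2)=b$ and the middle average is $\tfrac12$, then $b=(m-2)/(m-1)$. So either combination gives the Borda vector.

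(R3) Suppose $s_{-1}=s'_{-1}$ and $s_{-m}=s'_{-m}$, with all four defined. Then $s'=\alpha s$ on positions $2,\dots,m$, and $s'=\beta s+(1-\beta)$ on positions $1,\dots,m-1$. If $s$ takes two distinct values on positions $2,\dots,m-1$, comparing the two expressions there gives $\alpha=\beta=1$, so $s=s'$. Otherwise $s(2)=s(m-1)=c$, and then also $s'(2)=s'(m-1)=c'$.

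Now the case analysis. Each of (i), (ii) fails through (a), (b), (c) or (d), and (iii) fails through (c) or (d).

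Generic case: $s$ and $s'$ are neither plurality nor antiplurality, so all (a),(b) hold. By (R1), $s_{-1}$ and $s_{-m}$ are not both Borda. If (i) fails via 1(c) and (ii) via m(d), or vice versa, I plan to use (iii). Failure of ave(c) together with the Borda one of $s_{\mp}$ contradicts (R2). Failure of ave(d) together with the equal pair should pin $s'$ down as an affine image of $s$ with the same middle average, which forces $s=s'$. If both fail via (d), we are in (R3). Its subcase $c\ne c'$, $c\ne\tfrac12$ makes (iii) hold. The subcase $c=c'$ gives $s=s'$. The subcase $c=\tfrac12$ is the excluded one.

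Degenerate cases: $s$ or $s'$ is plurality or antiplurality. Suppose, say, $s'$ is plurality. Then 1(b) fails, but $s'_{-m}$ is the plurality vector for $m-1$ alternatives, and $s'_{\text{ave}}=(1,0,0)\ne s_{\text{ave}}$ unless $s=s'$. So ave(d) holds. If ave(c) fails, then (ii) must hold. Indeed, $s_{-m}$ is defined, because the middle average being $\tfrac12$ rules out antiplurality. It differs from plurality, because $s(2)>0$. And it is non-Borda by (R2). The cases where $s'$ is antiplurality, or $s$ itself is plurality or antiplurality, are handled symmetrically, using (ii) or (i) respectively together with (iii).

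I expect the main obstacle to be this bookkeeping rather than any single estimate. In particular, the mixed failures, one of (i)/(ii) failing through (c) and the other through (d), need the affine-image computation behind (R3) to be carried out carefully. It must show that equality of the averaged vectors, combined with one equal truncation, forces $s=s'$.
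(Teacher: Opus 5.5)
Your architecture matches the paper's. (R1)--(R2) are its Claim 1, (R3) is its Claim 2, and you split cases by how (i) and (ii) fail instead of by how (iii) fails, which is only an organizational difference. The generic case is sound, and the affine computation you defer works; it is exactly the paper's Case 2:
\begin{itemize}
\item[--] from $s_{-m}=s'_{-m}$ you get $s'=\beta s+(1-\beta)$ on positions $1,\dots,m-1$. Equal middle sums $S=\sum_{k=2}^{m-1}s(k)$ then give $(1-\beta)(S-(m-2))=0$, so $\beta=1$ because $s$ is not antiplurality;
\item[--] from $s_{-1}=s'_{-1}$ you get $s'=\alpha s$ on positions $2,\dots,m$, and $S>0$ forces $\alpha=1$.
\end{itemize}

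The genuine error is in your degenerate case. When $s'$ is plurality and ave(c) fails, you say $s_{-m}$ differs from the plurality vector ``because $s(2)>0$.'' That inference is false. $s_{-m}$ is the plurality vector for $m-1$ alternatives exactly when $s(2)=s(m-1)$, which is perfectly compatible with $s(2)>0$. Together with middle average $\tfrac12$, this gives $s(2)=\cdots=s(m-1)=\tfrac12$.

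For instance, take $m=4$, $s=(1,\tfrac12,\tfrac12,0)$ and $s'=(1,0,0,0)$. Here none of (i)--(iii) holds: 1(b) fails, $s_{-m}=s'_{-m}=(1,0,0)$, and $s_{\text{ave}}=(1,\tfrac12,0)$ is Borda. So your argument as written asserts (ii) in a case where (ii) is false.

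The correct justification is that such a pair has $s(2)=s(m-1)=\tfrac12$ and $s'(2)=s'(m-1)=0$. That is precisely the configuration excluded by the lemma's hypothesis, and it is how the paper obtains m(d) in its Case 1.

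The ``symmetric'' case where $s'$ is antiplurality needs the same repair. There $s_{-1}$ equals the antiplurality vector iff $s(2)=s(m-1)$, so 1(d) rests on the exclusion hypothesis, not on anything like $s(m-1)<1$. By contrast, when $s$ itself is plurality or antiplurality, no care is needed: (iii) holds outright, since $s_{\text{ave}}\in\{(1,0,0),(1,1,0)\}$ is non-Borda and differs from $s'_{\text{ave}}$ unless $s=s'$.
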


To show this lemma, we first show two claims.\medskip\\
\noindent\textbf{Claim 1.} All but at most one of 1(c), m(c) and ave(c) hold.

\begin{proof}
  Suppose, for contradiction, that at least two of 1(c), m(c), and ave(c) do not hold.\medskip\\
  \textbf{Case 1.} Consider the case where both 1(c) and m(c) do not hold. Since 1(c) does not hold, $s(2)-s(3)=s(3)-s(4)=\cdots=s(m-1)-s(m)$. In addition, since m(c) does not hold, $s(1)-s(2)=s(2)-s(3)=\cdots =s(m-2)-s(m-1)$. Thus we have $s(1)-s(2)=s(2)-s(3)=\cdots =s(m-1)-s(m)$. It contradicts the assumption that $f$ is not the Borda rule.\medskip\\
  \textbf{Case 2.} Consider the case where both 1(c) and ave(c) do not hold. Since 1(c) does not hold, there exists $\alpha\leq\frac{1}{m-2}$ such that $s(m-k)=k\alpha$ for all $k\in\{1,\dots,m-2\}$. Since ave(c) does not hold, $s_{\text{ave}}(2)=\frac{1}{m-2}\sum_{k=2}^{m-1} s(k)=\frac{1}{2}.$ Thus we have $\alpha=\frac{1}{m-1}$. It contradicts the assumption that $f$ is not the Borda rule.\medskip\\
  \textbf{Case 3.} Consider the case where both m(c) and ave(c) do not hold. Since m(c) does not hold, there exists $\alpha\leq\frac{1}{m-2}$ such that $s(k+1)=1-k\alpha$ for all $k\in\{1,\dots,m-2\}$. Since ave(c) does not hold, $s_{\text{ave}}(2)=\frac{1}{m-2}\sum_{k=2}^{m-1} s(k)=\frac{1}{2}.$ Thus we have $\alpha=\frac{1}{m-1}$. It contradicts the assumption that $f$ is not the Borda rule.
\end{proof}

\noindent\textbf{Claim 2.} Both 1(d) and m(d) do not hold if and only if $s(2)=s(m-1)\neq s'(2)=s'(m-1)$.

\begin{proof}
  The ``if'' direction is obvious. We show the ``only if'' direction. Suppose that both 1(d) and m(d) do not hold, which means $s_{-1} = s'_{-1}$ and $s_{-m} = s'_{-m}$. If $s(2) = 0$, then $s_{-1} = (0, \dots, 0)$, and hence $s'_{-1}=(0, \dots, 0)$. It contradicts $s\neq s'$. Thus we have $s(2) \neq 0$. Similarly, we have $s'(2)\neq0$, $s(m-1)\neq 1$, and $s'(m-1)\neq 1$. Since $s_{-1} = s'_{-1}$, we have $\frac{s(k)}{s(2)} = \frac{s'(k)}{s'(2)}$ for all $k \in \{2, \dots, m-1\}$. Let $\alpha = \frac{s(2)}{s'(2)} > 0$, giving $s(k) = \alpha s'(k)$.
  Since $s_{-m} = s'_{-m}$, we have $\frac{s(k) - s(m-1)}{1 - s(m-1)} = \frac{s'(k) - s'(m-1)}{1 - s'(m-1)}$ for all $k \in \{2, \dots, m-1\}$. Let $\beta = \frac{1 - s(m-1)}{1 - s'(m-1)} > 0$, giving $s(k) - s(m-1) = \beta (s'(k) - s'(m-1))$.
  Then we have
  \begin{equation*}
    \alpha s'(k) - s(m-1) = \beta s'(k) - \beta s'(m-1),
  \end{equation*}
  that is, $(\alpha - \beta) s'(k) = s(m-1) - \beta s'(m-1)$.
  This equation must hold for all $k \in \{2, \dots, m-1\}$. Since the right-hand side is constant with respect to $k$, the left-hand side must also be constant.
  If $\alpha = \beta$, we have $s(m-1) = \beta s'(m-1) = \alpha s'(m-1)$. Then, by $\alpha = \beta= \frac{1 - s(m-1)}{1 - s'(m-1)}$, we have $s(m-1)=s'(m-1)$.
  This implies $\alpha = 1$, leading to $s(k) = s'(k)$ for all $k$, which contradicts $s \neq s'$. Thus, $\alpha \neq \beta$.
  Since $\alpha \neq \beta$, we have $s'(k)$ must be a constant for all $k \in \{2, \dots, m-1\}$. Therefore, since $s\neq s'$, $s(2) = s(m-1) \neq s'(2) = s'(m-1)$.
\end{proof}

\begin{proof}[Proof of Lemma \ref{lem:convolution}]
  It is straightforward by definition that ave(a) and ave(b) hold. If both ave(c) and ave(d) hold, then we are done. Suppose not. \medskip\\
  \textbf{Case 1.} Consider the case where ave(c) does not hold. Then we have
  \begin{align}
    s_{\text{ave}}(2)=\frac{1}{m-2}\sum_{k=2}^{m-1} s(k)=\frac{1}{2}.\label{avec}
  \end{align}
  Therefore both 1(a) and m(a) hold. In addition, by Claim 1, both 1(c) and m(c) hold. If both 1(b) and 1(d) hold, then we are done. Suppose not.
  \begin{itemize}
    \item Suppose 1(b) does not hold. Then $s'(2)=\cdots=s'(m-1)=0$. Thus m(b) holds. We show, by contradiction, that m(d) holds. Suppose not. Then $s_{-m}=s'_{-m}$. By \eqref{avec}, we have $s(2)=\cdots=s(m-1)=\frac{1}{2}$. It contradicts the assumption of this lemma. Thus m(d) holds, and hence m(a)-m(d) hold.

    \item Suppose 1(d) does not hold. We show, by contradiction, that m(d) holds. Suppose not. Then Claim 2 and \eqref{avec} imply that $s(2)=\cdots=s(m-1)=\frac{1}{2}$ and $s'(2)=\cdots=s'(m-1)$. It contradicts the assumption of this lemma. In addition, we show, by contradiction, that m(b) holds. Suppose not. Then $s'(2)=\cdots=s'(m-1)=1$. Therefore, since 1(d) does not hold, we have $s(2)=\cdots=s(m-1)$. By \eqref{avec}, $s(2)=\cdots=s(m-1)=\frac{1}{2}$. It contradicts the assumption of this lemma. Thus m(b) holds, and hence m(a)-m(d) hold.
  \end{itemize}
  \textbf{Case 2.} Consider the case where ave(d) does not hold. Then we have
  \begin{align}
    \frac{1}{m-2}\sum_{k=2}^{m-1} s(k)=\frac{1}{m-2}\sum_{k=2}^{m-1} s'(k).\label{aved}
  \end{align}
  We show, by contradiction, that 1(a) holds. Suppose not. Then $s(2)=\cdots =s(m-1)=0$. By \eqref{aved}, $s'(2)=\cdots=s'(m-1)=0$. It contradicts $s\neq s'$. Similarly, 1(b), m(a), and m(b) hold. In addition, by Claim 1, either 1(c) or m(c) holds.
  \begin{itemize}
    \item Suppose 1(c) holds. We show, by contradiction, that 1(d) holds. Suppose not. Then for all $k\in \{2,\ldots, m-1\}$,
      \begin{align}
        \frac{s(k)}{s(2)}=\frac{s'(k)}{s'(2)}. \label{1d}
      \end{align}
      Together with \eqref{aved}, we have
      \begin{align}
        \frac{s(2)}{s'(2)}\sum_{k=2}^{m-1} s'(k)=\sum_{k=2}^{m-1} s'(k).
      \end{align}
      Since $\sum_{k=2}^{m-1} s'(k)\neq 0$, $s(2)=s'(2)$. Then \eqref{1d} implies that $s(k)=s'(k)$ for all $k\in\{2,\ldots,m-1\}$, and hence $s=s'$, which is a contradiction. Thus 1(a)-1(d) hold.

    \item Suppose m(c) holds. We show, by contradiction, that m(d) holds. Suppose not. Then for all $k\in\{2,\ldots,m-1\}$,
      \begin{align}
        \frac{s(k)-s(m-1)}{s(1)-s(m-1)}=\frac{s'(k)-s'(m-1)}{s'(1)-s'(m-1)}. \label{md}
      \end{align}
      Note that $s(1)=s'(1)=1$. Summing both sides over $k\in\{2,\ldots,m-1\}$, we obtain
      \begin{align*}
        \frac{\sum_{k=2}^{m-1}s(k) - (m-2)s(m-1)}{1-s(m-1)} = \frac{\sum_{k=2}^{m-1}s'(k) - (m-2)s'(m-1)}{1-s'(m-1)}.
      \end{align*}
      Together with \eqref{aved}, we have
      \begin{align*}
        \sum_{k=2}^{m-1}s(k)\{s(m-1) - s'(m-1)\} = (m-2)\{s(m-1) - s'(m-1)\}.
      \end{align*}
      Since m(a) holds, $s(m-1) \neq 1$. Therefore, $\sum_{k=2}^{m-1}s(k)\neq m-2$. Thus we have $s(m-1) = s'(m-1)$.
      Then, by \eqref{md}, $s(k) = s'(k)$ for all $k \in \{2, \ldots, m-1\}$. This contradicts $s \neq s'$. Thus, m(d) holds, and hence m(a)-m(d) hold. \qedhere
  \end{itemize}
\end{proof}
Now we consider four cases.

\subsubsection{The case where Lemma \ref{lem:convolution}(i) or (ii) holds}

We only deal with the case where (ii) holds, since the other case can be handled in a similar way. Then, by the assumption of induction, there exist a preference profile $\succsim^1\in\mathcal{D}$ such that
\begin{enumerate}[label=(\roman*)]
  \item $r(x_m,\succsim^1_i)=m$ for all $i\in N^1$,
  \item $|\{ i\in N^1:x_k\succsim^1_i x_1\}|>|\{i\in N^1:x_1\succsim^1_i x_k \}|$ for all $k\in \{2,3,\dots,m-1\}$,
  \item $\sum_{i\in N^1}s(r(x_1,\succsim^1_i))>\sum_{i\in N^1}s(r(x_k,\succsim^1_i))$ for all $k\in \{2,3,\dots,m-1\}$, and
  \item $\sum_{i\in N^1}s'(r(x_2,\succsim^1_i))>\sum_{i\in N^1}s'(r(x_1,\succsim^1_i))$.
\end{enumerate}
For each $\ell\in\{2,\dots,m\}$, let $\succsim^\ell\in\mathcal{D}$ be such that there exists some $t\in\mathbb{N}$ such that
\begin{itemize}
  \item for each linear order $\succsim^\bullet$ on $X\backslash\{x_m\}$, $|\{i\in N^\ell:\succsim^\ell_i|_{X\backslash\{x_m\}}=\succsim^\bullet\}|=t$, 
  \item $r(x_m,\succsim^\ell_i)=m-\ell+1$ for all $i\in N^\ell$, and
  \item $N^1, N^2, \ldots, N^\ell$ are mutually disjoint.
\end{itemize}
Then, for all $\ell\in\{2,\dots,m\}$,
\begin{enumerate}[label=(\roman*)']
  \item $| \{ i\in N^\ell:x_k\succsim^\ell_i x_1 \} |= | \{ i\in N^\ell:x_1\succsim^\ell_i x_k \}|$ for all $k\in \{2,3,\dots,m-1\}$,
  \item $\sum_{i\in N^\ell}s(r(x_1,\succsim^\ell_i))=\sum_{i\in N^\ell}s(r(x_k,\succsim^\ell_i))$ for all $k\in \{2,3,\dots,m-1\}$, and
  \item  $\sum_{i\in N^\ell}s'(r(x_2,\succsim^\ell_i))=\sum_{i\in N^\ell}s'(r(x_1,\succsim^\ell_i))$.
\end{enumerate}
In addition, consider a single-voter profile $\succsim^+ \in \mathscr{P}$ consisting of a voter $i^+\notin \bigcup_{\ell=1}^m N^\ell$ whose preference relation satisfies $x_m \succsim^+_{i^+} x_{m-1} \succsim^+_{i^+} \cdots \succsim^+_{i^+} x_2 \succsim^+_{i^+} x_1$.
By replicating the profiles $\succsim^1, \dots, \succsim^m$ if necessary, we assume without loss of generality that
\begin{itemize}
  \item $|N^1| = |N^2| = \cdots = |N^m| = K$ for some integer $K$ and
  \item for any $x_k \in X \setminus \{x_1\}$, if the total scores of $x_1$ and $x_k$ under $s$ (resp. $s'$) in $\bigcup_{\ell=1}^m N^\ell$ are not equal, the absolute difference between them is strictly greater than $1$.
\end{itemize}
This ensures that the addition of the single voter $i^+$ cannot overturn any strict score inequalities established among $\succsim^1,\dots,\succsim^m$.
We construct $\succsim$ by combining this profile and other profiles $\succsim^2,\succsim^3,\dots,\succsim^m$ and $\succsim^+$;
that is, we define $\succsim=(\succsim^1,\dots,\succsim^m,\succsim^+)\in\mathscr{P}^{mK+1}$.\medskip\\
\textbf{Claim 3.} $x_1$ is a Condorcet loser at $\succsim$.
\begin{proof}
  It is straightforward from the definition of $\succsim$ that, for all $k\in\{2,\dots,m-1\}$, $|\{i\in N:x_1\succsim_i x_k\}|<|\{i\in N:x_k\succsim_i x_1\}|$. We show that $|\{i\in N:x_1\succsim_i x_m\}|<|\{i\in N:x_m\succsim_i x_1\}|$. Since $r(x_m,\succsim^1_i)=m$ for all $i\in N^1$ and $r(x_m,\succsim^m_i)=1$ for all $i\in N^m$,
  \begin{align*}
    |\{ i\in N^1:x_m\succsim^1_i x_1 \}|+|\{ i\in N^m:x_m\succsim^m_i x_1 \}|
    =|\{ i\in N^1:x_1\succsim^1_i x_m \}|+|\{ i\in N^m:x_1\succsim^m_i x_m \}|
    =K.
  \end{align*}
  In addition, by the definitions of $\succsim^2,\dots, $ and $\succsim^{m-1}$, we have that for all $k\in\{2,\dots,m-1\}$,
  \begin{align*}
    &|\{ i\in N^k:x_m\succsim^k_i x_1 \}|+|\{i\in N^{m-k+1}:x_m\succsim^{m-k+1}_i x_1 \}|\\
    &=|\{ i\in N^k:x_1\succsim^k_i x_m \}|+|\{ i\in N^{m-k+1}:x_1\succsim^{m-k+1}_i x_m \}|\\
    &=K.
  \end{align*}
  Therefore, we have
  \begin{equation*}
    |\{i\in N: x_1\succsim_i x_m\}|=\frac{1}{2}mK<\frac{1}{2}mK+1=|\{i\in N: x_m\succsim_i x_1\}|. \qedhere
  \end{equation*}
\end{proof}

\noindent\textbf{Claim 4.} $f(\succsim)=\{x_1\}$.

\begin{proof}
  First, we show that $\sum_{i\in N}s(r(x_1,\succsim_i))>\sum_{i\in N}s(r(x_m,\succsim_i))$.
  By the construction of $\succsim^1$,
  \begin{equation}
    \sum_{k=1}^{m-1}\sum_{i\in N^1}s(r(x_k,\succsim^1_i))=K\sum_{j=1}^{m-1} s(j).
  \end{equation}
  By the condition that $\sum_{i\in N^1}s(r(x_1,\succsim^1_i))>\sum_{i\in N^1}s(r(x_k,\succsim^1_i))$ for all $k\in \{2,3,\dots,m-1\}$,
  \begin{equation}
    \sum_{i\in N^1}s(r(x_1,\succsim^1_i))>\frac{1}{m-1}K\sum_{j=1}^{m-1} s(j).
  \end{equation}
  In addition, for all $\ell\in\{2,\dots,m\}$,
  \begin{equation}
    \sum_{i\in N^\ell}s(r(x_1,\succsim^\ell_i))=\frac{1}{m-1}K \qty[ \sum_{j=1}^m  s(j)-s(m-\ell+1) ].
  \end{equation}
  Therefore,
  \begin{align*}
    \sum_{i\in N\backslash\{i^+\}}s(r(x_1,\succsim_i))
    &>\frac{1}{m-1}K\sum_{j=1}^{m-1} s(j) + \sum_{\ell=2}^m \frac{1}{m-1}K \qty[ \sum_{j=1}^m   s(j)-s(m-\ell+1) ]\\
    &=\sum_{\ell=1}^m \frac{1}{m-1}K \qty[ \sum_{j=1}^m s(j)-s(m-\ell+1) ]\\
    &=\frac{m-1}{m-1}K\sum_{j=1}^m s(j)\\
    &=K\sum_{j=1}^m s(j)\\
    &=\sum_{i\in N\backslash\{i^+\}}s(r(x_m,\succsim_i)).
  \end{align*}
  Since $K$ is sufficiently large, we have $$\sum_{i\in N\backslash\{i^+\}}s(r(x_1,\succsim_i))-\sum_{i\in N\backslash\{i^+\}}s(r(x_m,\succsim_i))>s(1) - s(m).$$
  Thus we have
  \begin{align*}
    \sum_{i\in N}s(r(x_1,\succsim_i))&=\sum_{i\in N\backslash\{i^+\}}s(r(x_1,\succsim_i))+s(m)\\
    &>\sum_{i\in N\backslash\{i^+\}}s(r(x_m,\succsim_i)) +s(1)\\
    &=\sum_{i\in N}s(r(x_m,\succsim_i)).
  \end{align*}

  Next, we show that $\sum_{i\in N}s(r(x_1,\succsim_i))>\sum_{i\in N}s(r(x_k,\succsim_i))$ for all $k\in\{2,3,\dots,m-1\}$. Take any $k\in\{2,3,\dots,m-1\}$. By the definition of $\succsim^1$, we have $\sum_{i\in N^1}s(r(x_1,\succsim^1_i))>\sum_{i\in N^1}s(r(x_k,\succsim^1_i))$.
  Since $K$ is sufficiently large, we have
  \[
    \sum_{i\in N^1}s(r(x_1,\succsim^1_i))-\sum_{i\in N^1}s(r(x_k,\succsim^1_i))>s(m-k+1) - s(m).
  \]
  In addition, $\sum_{i\in N^\ell}s(r(x_1,\succsim^\ell_i))=\sum_{i\in N^\ell}s(r(x_k,\succsim^\ell_i))$ for all $\ell\in\{2,\dots,m\}$.
  Thus, we have
  \begin{align*}
    \sum_{i\in N}s(r(x_1,\succsim_i))
    &=\sum_{i\in N^1}s(r(x_1,\succsim^1_i)) + s(m) +\sum_{\ell=2}^m\sum_{i\in N^\ell}s(r(x_1,\succsim^\ell_i))\\
    &>\sum_{i\in N^1}s(r(x_k,\succsim^1_i))+s(m-k+1)+\sum_{\ell=2}^m\sum_{i\in N^\ell}s(r(x_k,\succsim^\ell_i))\\
    &=\sum_{i\in N}s(r(x_k,\succsim_i)). \qedhere
  \end{align*}
\end{proof}

\noindent\textbf{Claim 5.} $x_1\notin f'(\succsim)$.

\begin{proof}
  It suffices to show that $\sum_{i\in N}s'(r(x_1,\succsim_i))<\sum_{i\in N}s'(r(x_2,\succsim_i))$.
  By the definition of $\succsim^1$, we have $\sum_{i\in N^1}s'(r(x_1,\succsim^1_i))<\sum_{i\in N^1}s'(r(x_2,\succsim^1_i))$.
  In addition, for all $\ell\in\{2,\dots,m\}$, we have $\sum_{i\in N^\ell}s'(r(x_1,\succsim^\ell_i))=\sum_{i\in N^\ell}s'(r(x_2,\succsim^\ell_i))$.
  Thus, we have
  \begin{align*}
    \sum_{i\in N}s'(r(x_2,\succsim_i))
    &=\sum_{i\in N^1}s'(r(x_2,\succsim^1_i)) +\sum_{\ell=2}^m\sum_{i\in N^\ell}s'(r(x_2,\succsim^\ell_i)) +  s'(m-1) \\
    &>\sum_{i\in N^1}s'(r(x_1,\succsim^1_i))+\sum_{\ell=2}^m\sum_{i\in N^\ell}s'(r(x_1,\succsim^\ell_i))+s'(m)\\
    &=\sum_{i\in N}s'(r(x_1,\succsim_i)). \qedhere
  \end{align*}
\end{proof}

\subsubsection{The case where Lemma \ref{lem:convolution}(iii) holds}

Suppose that conditions ave(a)-ave(d) hold.
By the induction hypothesis, there exist a preference profile $\succsim^3 \in \mathcal{D}$ on $\{x, y, z\}$ such that $x$ is a Condorcet loser at $\succsim^3$, $f^{s_{\text{ave}}}(\succsim^3) = \{x\}$, and $x \notin f^{s'_{\text{ave}}}(\succsim^3)$.
Let $n$ be the number of voters in $\succsim^3$.
We partition the voters in $N^3$ into three sets based on the rank of $x$: $n_1$ voters rank $x$ first, $n_2$ voters rank $x$ second, and $n_3$ voters rank $x$ third.
Since $x$ is a Condorcet loser at $\succsim^3$, we have $|\{i \in N^3 : y \succsim_i^3 x\}| > |\{i \in N^3 : x \succsim_i^3 y\}|$ and $|\{i \in N^3 : z \succsim_i^3 x\}| > |\{i \in N^3 : x \succsim_i^3 z\}|$. Summing both sides of these inequalities yields $(2n_3 + n_2) > (2n_1 + n_2)$, which implies $n_3 > n_1$.

Let $X = \{x, y, z, w_1, \dots, w_{m-3}\}$. We construct a profile $\succsim^{\text{ave}}$ on $X$ consisting of two sub-profiles, $\succsim^*$ and $\succsim'$. 

\noindent\textbf{Construction of $\succsim^*$:}
The profile $\succsim^*$ consists of $3n(m-2)(m-2)!$ voters in total.
For each voter $i$ in $N^3$ with preferences $a_i \succ^3_i b_i \succ^3_i c_i$ on $\{x,y,z\}$, we create $3(m-2)(m-2)!$ voters in $\succsim^*$. Specifically, let $\Pi_i$ be the set of all $(m-2)!$ permutations of $\{b_i, w_1, \dots, w_{m-3}\}$. For each permutation $\pi \in \Pi_i$, we introduce $3(m-2)$ voters who rank $a_i$ first, $c_i$ last, and order the remaining alternatives in positions $2$ through $m-1$ according to $\pi$.

\noindent\textbf{Construction of $\succsim'$:}
The profile $\succsim'$ consists of $2n(m-3)(m-1)!$ voters in total.
For each $j \in \{1, \dots, m-3\}$, we create two groups of voters, each of size $n(m-1)!$:
\begin{enumerate}
  \item Let $\hat{\Pi}_j$ be the set of all permutations of $X \setminus \{w_j\}$. For each permutation $\pi \in \hat{\Pi}_j$, we introduce $n$ voters who rank $w_j$ first and rank the remaining alternatives in positions 2 through $m$ according to $\pi$.
  \item For each permutation $\pi \in \hat{\Pi}_j$, we introduce $n$ voters who rank $w_j$ last and rank the remaining alternatives in positions 1 through $m-1$ according to $\pi$.
\end{enumerate}
\noindent\textbf{Claim 6.} $x$ is a Condorcet loser at $\succsim^{\text{ave}} = (\succsim^*, \succsim')$.
\begin{proof}
  By construction, in $\succsim'$, the pairwise majority comparison between $x$ and any other alternative results in a tie. Since $\succsim^*|_{\{x,y,z\}}$ consists of $3(m-2)(m-2)!$ copies of $\succsim^3$ and $x$ is beaten by $y$ and $z$ in $\succsim^3$, $x$ is also beaten by them in $\succsim^*$. For the pairwise comparison between $x$ and $w_j$ in $\succsim^*$, we have $|\{i \in N^* : w_j \succsim^*_i x\}| - |\{i \in N^* : x \succsim^*_i w_j\}| = 3(m-2)(m-2)!(n_3 - n_1)$. Since $n_3 > n_1$, $w_j$ defeats $x$. Thus $x$ is a Condorcet loser at $\succsim^*$, and hence $x$ is a Condorcet loser at $\succsim^{\text{ave}}$.
\end{proof}

\noindent\textbf{Claim 7.} $f(\succsim^{\text{ave}})=\{x\}$.

\begin{proof}
  First, we compare the score of $x$ with those of $y$ and $z$. By construction, we have $\sum_{i \in N'} s(r(x, \succsim'_i))=\sum_{i \in N'} s(r(y, \succsim'_i))=\sum_{i \in N'} s(r(z, \succsim'_i))$.
  In addition,
  \begin{align*}
    \sum_{i \in N^*} s(r(x, \succsim^*_i))&=3(m-2)(m-2)! \qty{n_1s(1)+n_2\frac{1}{m-2}\sum_{k=2}^{m-1}s(k)+n_3s(3)}\\
    &=3(m-2)(m-2)!\sum_{i\in N^3}s(r(x,\succsim^3_i)).
  \end{align*}
  Similarly, we have $\sum_{i \in N^*} s(r(y, \succsim^*_i))=3(m-2)(m-2)!\sum_{i\in N^3}s(r(y,\succsim^3_i))$ and $\sum_{i \in N^*} s(r(z, \succsim^*_i))=3(m-2)(m-2)!\sum_{i\in N^3}s(r(z,\succsim^3_i))$. Since $f^{s_{\text{ave}}}(\succsim^3) = \{x\}$,
  \begin{align*}
    &\sum_{i\in N^3}s(r(x,\succsim^3_i))>\sum_{i\in N^3}s(r(y,\succsim^3_i)) ~~~ \text{and}~~~
    \sum_{i\in N^3}s(r(x,\succsim^3_i))>\sum_{i\in N^3}s(r(z,\succsim^3_i)).
  \end{align*}
  Thus we have
  \begin{align*}
    &\sum_{i \in N^*} s(r(x, \succsim^*_i))>\sum_{i \in N^*} s(r(y, \succsim^*_i)) ~~~\text{and}~~~
    \sum_{i \in N^*} s(r(x, \succsim^*_i))>\sum_{i \in N^*} s(r(z, \succsim^*_i)).
  \end{align*}
  They imply that
  \begin{align*}
    \sum_{i \in N^{\text{ave}}} s(r(x, \succsim^{\text{ave}}_i))&=\sum_{i \in N^*} s(r(x, \succsim^*_i))+\sum_{i \in N'} s(r(x, \succsim'_i))\\
    &>\sum_{i \in N^*} s(r(y, \succsim^*_i))+\sum_{i \in N'} s(r(y, \succsim'_i))\\
    &=\sum_{i \in N^{\text{ave}}} s(r(y, \succsim^{\text{ave}}_i))
  \end{align*}
  and $\sum_{i \in N^{\text{ave}}} s(r(x, \succsim^{\text{ave}}_i)) > \sum_{i \in N^{\text{ave}}} s(r(z, \succsim^{\text{ave}}_i))$.

  Next, take any $j\in\{1,\dots,m-3\}$ and we compare the score of $x$ and $w_j$. By straightforward calculation, we have
  \begin{align}
    &\sum_{i \in N^*} s(r(w_j, \succsim^*_i))=3n(m-2)!\sum_{k=2}^{m-1}s(k) \text{ and}\\
    &\sum_{i \in N'} s(r(w_j, \succsim'_i))=n(m-2)!\qty{(2m-5)s(1)+(2m-8)\sum_{k=2}^{m-1}s(k)+(2m-5)s(m) }.
  \end{align}
  Thus we have
  \begin{align}
    \sum_{i \in N^{\text{ave}}} s(r(w_j, \succsim^{\text{ave}}_i))
    = \sum_{i \in N^*} s(r(w_j, \succsim^*_i))+\sum_{i \in N'} s(r(w_j, \succsim'_i))
    =n(2m-5)(m-2)!\sum_{k=1}^{m}s(k).
  \end{align}
  Since $|N^{\text{ave}}|=|N^*|+|N'|=3n(m-2)(m-2)!+2n(m-3)(m-1)!=mn(2m-5)(m-2)!$, the score of $w_j$ is exactly the average score of all alternatives. Therefore, the sum of the scores of $x$, $y$, and $z$ is exactly three times the average score. Since the score of $x$ is higher than those of $y$ and $z$, it is higher than the average score. Therefore the score of $x$ is higher than that of $w_j$.
\end{proof}

\noindent\textbf{Claim 8.} $x\notin f'(\succsim^{\text{ave}})$.

\begin{proof}
  We show that there exists $v \in \{y, z\}$ whose score is strictly higher than that of $x$. By construction, we have $\sum_{i \in N'} s'(r(x, \succsim'_i))=\sum_{i \in N'} s'(r(y, \succsim'_i))=\sum_{i \in N'} s'(r(z, \succsim'_i))$. In addition,
  \begin{align*}
    &\sum_{i \in N^*} s'(r(x, \succsim^*_i))=3(m-2)(m-2)!\sum_{i\in N^3}s'(r(x,\succsim^3_i)), \\
    &\sum_{i \in N^*} s'(r(y, \succsim^*_i))=3(m-2)(m-2)!\sum_{i\in N^3}s'(r(y,\succsim^3_i)), \text{ and}\\
    &\sum_{i \in N^*} s'(r(z, \succsim^*_i))=3(m-2)(m-2)!\sum_{i\in N^3}s'(r(z,\succsim^3_i)).
  \end{align*}
  Since $x\notin f^{s'_{\text{ave}}}(\succsim^3)$, we have $\sum_{i\in N^3}s'(r(x,\succsim^3_i))<\sum_{i\in N^3}s'(r(y,\succsim^3_i))$ or $\sum_{i\in N^3}s'(r(x,\succsim^3_i))<\sum_{i\in N^3}s'(r(z,\succsim^3_i))$. Without loss of generality, we assume $\sum_{i\in N^3}s'(r(x,\succsim^3_i))<\sum_{i\in N^3}s'(r(y,\succsim^3_i))$.
  Then, we have $\sum_{i\in N^*}s'(r(x,\succsim^*_i))<\sum_{i\in N^*}s'(r(y,\succsim^*_i))$.
  Therefore, we have
  \begin{align*}
    \sum_{i \in N^{\text{ave}}} s'(r(x, \succsim^{\text{ave}}_i))&=\sum_{i \in N^*} s'(r(x, \succsim^*_i))+\sum_{i \in N'} s'(r(x, \succsim'_i))\\
    &<\sum_{i \in N^*} s'(r(y, \succsim^*_i))+\sum_{i \in N'} s'(r(y, \succsim'_i))\\
    &=\sum_{i \in N^{\text{ave}}} s'(r(y, \succsim^{\text{ave}}_i)). \qedhere
  \end{align*}
\end{proof}

\subsubsection{The case with $s(2)=s(m-1)=\frac{1}{2}$ and $s'(2)=s'(m-1)=\alpha<\frac{1}{2}$}
For readability, let $X=\{x,y_1,y_2,\ldots,y_{m-1}\}$. We construct a profile that satisfies three conditions of Proposition \ref{prop:existence}. Let $b\in\mathbb{N}$ be such that
\[
  \frac{1-\alpha}{1-2\alpha}<b.
\]
Let $\succsim\in\mathscr{P}^{(m-1)b+1}$ be such that
\begin{align*}
  &|\{i\in N:y_1\succsim_iy_2\succsim_i \cdots\succsim_i y_{m-2} \succsim_i x\succsim_i y_{m-1}\}|=b,\\
  &|\{i\in N:y_2\succsim_iy_3\succsim_i \cdots\succsim_i y_{m-1} \succsim_i x\succsim_i y_{1}\}|=b,\\
  &|\{i\in N:y_3\succsim_iy_4\succsim_i \cdots\succsim_i y_{1} \succsim_i x\succsim_i y_{2}\}|=b,\\
  &\hspace{10pt}\vdots \\
  &|\{i\in N:y_{m-1}\succsim_iy_1\succsim_i \cdots\succsim_i y_{m-3} \succsim_i x\succsim_i y_{m-2}\}|=b, \text{ and}\\
  &|\{i\in N:x\succsim_iy_1\succsim_i y_2 \succsim_i \cdots \succsim_i y_{m-1}\}|=1.
\end{align*}
Since $|\{i\in N: x\succsim_i y_k\}|=b+1<(m-2)b=|\{i\in N: y_k\succsim_i x\}|$ holds for all $k\in \{1,\dots,m-1\}$, $x$ is a Condorcet loser at $\succsim$.

By
\begin{align*}
  &\sum_{i\in N}s(r(x,\succsim_i))=1\cdot 1+\frac{1}{2}\cdot (m-1)b = \frac{m-1}{2}b+1,\\
  &\sum_{i\in N}s(r(y_k,\succsim_i))=1\cdot b+\frac{1}{2}\cdot ((m-3)b+1) = \frac{m-1}{2}b+\frac{1}{2} \qquad(k=1,2,\dots,m-2), ~\text{ and}\\
  &\sum_{i\in N}s(r(y_{m-1},\succsim_i))=1\cdot b+\frac{1}{2}\cdot (m-3)b = \frac{m-1}{2}b,
\end{align*}
we have $f(\succsim)=\{x\}$.

By
\begin{align*}
  &\sum_{i\in N}s'(r(x,\succsim_i))=1\cdot 1+\alpha\cdot (m-1)b = (m-1)\alpha b +1~\text{ and}\\
  &\sum_{i\in N}s'(r(y_1,\succsim_i))=1\cdot b+\alpha\cdot ((m-3)b+1) = ((m-3)\alpha+1) b+\alpha,
\end{align*}
we have $\sum_{i\in N}s'(r(y_1,\succsim_i))-\sum_{i\in N}s'(r(x,\succsim_i))=b(1-2\alpha)-(1-\alpha) > 0$.
Thus, $x\notin f'(\succsim)$.

\subsubsection{The case with $s(2)=s(m-1)=\frac{1}{2}$ and $s'(2)=s'(m-1)=\alpha>\frac{1}{2}$}
For readability, let $X=\{x,y_1,y_2,\ldots,y_{m-1}\}$. We construct a profile that satisfies three conditions of Proposition \ref{prop:existence}. Let $b,c\in\mathbb{N}$ be such that
\begin{align*}
  &b<c<\frac{2m-4}{m-1}b,\text{ and}\\
  &c<\frac{2+(m-4)\alpha}{m-1-(m-2)\alpha}b.
\end{align*}
Note that since $m \ge 4$ and $\alpha > \frac{1}{2}$, we have $\frac{2m-4}{m-1} > 1$ and $\frac{2+(m-4)\alpha}{m-1-(m-2)\alpha} > 1$.
Let $\succsim\in\mathscr{P}^{2(m-1)b+(m-1)c}$ be such that
\begin{align*}
  &|\{i\in N:y_1\succsim_iy_2\succsim_i \cdots\succsim_i y_{m-2} \succsim_i x\succsim_i y_{m-1}\}|=b,\\
  &|\{i\in N:y_2\succsim_iy_3\succsim_i \cdots\succsim_i y_{m-1} \succsim_i x\succsim_i y_{1}\}|=b,\\
  &\hspace{10pt}\vdots \\
  &|\{i\in N:y_{m-1}\succsim_iy_1\succsim_i \cdots\succsim_i y_{m-3} \succsim_i x\succsim_i y_{m-2}\}|=b,\\[10pt]
  &|\{i\in N:y_1\succsim_iy_2\succsim_i \cdots\succsim_i y_{m-2} \succsim_i y_{m-1}\succsim_i x\}|=b,\\
  &|\{i\in N:y_2\succsim_iy_3\succsim_i \cdots\succsim_i y_{m-1} \succsim_i y_1\succsim_i x\}|=b,\\
  &\hspace{10pt}\vdots \\
  &|\{i\in N:y_{m-1}\succsim_iy_1\succsim_i \cdots\succsim_i y_{m-3} \succsim_i y_{m-2}\succsim_i x\}|=b,\\[10pt]
  &|\{i\in N:x \succsim_i y_1\succsim_iy_2\succsim_i \cdots\succsim_i y_{m-1}\}|=c,\\
  &|\{i\in N:x \succsim_i y_2\succsim_iy_3\succsim_i \cdots\succsim_i y_{1}\}|=c,\\
  &\hspace{10pt}\vdots \\
  &|\{i\in N:x \succsim_i y_{m-1}\succsim_iy_1\succsim_i \cdots\succsim_i y_{m-2}\}|=c.
\end{align*}
Since $|\{i\in N: x\succsim_i y_k\}|=b+(m-1)c<b+(2m-4)b=(2m-3)b=|\{i\in N: y_k\succsim_i x\}|$ holds for all $k\in\{1,\dots,m-1\}$, $x$ is a Condorcet loser at $\succsim$.

By
\begin{align*}
  &\sum_{i\in N}s(r(x,\succsim_i))=1\cdot (m-1)c+\frac{1}{2}\cdot (m-1)b = \frac{m-1}{2}b+(m-1)c ~\text{ and}\\
  &\sum_{i\in N}s(r(y_k,\succsim_i))=1\cdot 2b+\frac{1}{2}\cdot ((2m-5)b+(m-2)c) = \frac{2m-1}{2}b+\frac{m-2}{2}c \quad (k=1,2,\dots,m-1),
\end{align*}
we have $\sum_{i\in N}s(r(x,\succsim_i))-\sum_{i\in N}s(r(y_k,\succsim_i))=\frac{m}{2}(c-b)>0$, and hence $f(\succsim)=\{x\}$.

By
\begin{align*}
  &\sum_{i\in N}s'(r(x,\succsim_i))=1\cdot (m-1)c+\alpha\cdot (m-1)b = (m-1)\alpha b+(m-1)c ~\text{ and}\\
  &\sum_{i\in N}s'(r(y_1,\succsim_i))=1\cdot 2b+\alpha\cdot ((2m-5)b+(m-2)c) = 2b+(2m-5)\alpha b +(m-2)\alpha c ,
\end{align*}
we have $\sum_{i\in N}s'(r(y_1,\succsim_i))-\sum_{i\in N}s'(r(x,\succsim_i))=(2+(m-4)\alpha)b-(m-1-(m-2)\alpha)c>0$, and hence $x\notin f'(\succsim)$.

\bibliographystyle{econ}
\bibliography{reference}

\end{document}